\newtheorem{assumption}{Assumption}
\newcounter{linecounter}
\newcommand{\linenumbering}{\ifthenelse{\value{linecounter}<10}{(0\arabic{linecounter})}{(\arabic{linecounter})}}
\renewcommand{\line}[1]{\refstepcounter{linecounter}\label{#1}\linenumbering}
\newcommand{\resetline}[1]{\setcounter{linecounter}{0}#1}
\renewcommand{\thelinecounter}{\ifnum \value{linecounter} >
9\else 0\fi \arabic{linecounter}}
\newcommand{\remove}[1]{}
\newcommand {\CAS} {{\sf Compare\&Swap}\xspace}
\newcommand {\FAI} {{\sf Fetch\&Inc}\xspace}
\newcommand {\TAS} {{\sf Test\&Set}\xspace}
\newcommand {\SWAP} {{\sf Swap}\xspace}
\newcommand {\R} {{\sf Read}\xspace}
\newcommand {\W} {{\sf Write}\xspace}
\newcommand {\COUNT} {{\sf Counter}\xspace}
\newcommand {\INC} {{\sf Increment}\xspace}
\newcommand {\Ren} {{\sf Rename}\xspace}
\newcommand {\Pop} {{\sf Pop}\xspace}
\newcommand {\Push} {{\sf Push}\xspace}
\newcommand {\Enq} {{\sf Enqueue}\xspace}
\newcommand {\Deq} {{\sf Dequeue}\xspace}
\newcommand {\Lin} {{\sf Lin}\xspace}
\newcommand {\SetLin} {{\sf SetLin}\xspace}
\newcommand {\IntLin} {{\sf IntLin}\xspace}
\newcommand {\SeqStack} {{\sf Seq-Stack}\xspace}
\newcommand {\SetSeqStack} {{\sf Set-Conc-Stack}\xspace}
\newcommand {\SeqQueue} {{\sf Seq-Queue}\xspace}
\newcommand {\SetSeqQueue} {{\sf Set-Conc-Queue}\xspace}
\newcommand {\IntSeqQueue} {{\sf Int-Conc-Queue}\xspace}
\newcommand{\AC}[1]{{#1}}
\title{Relaxed Queues and Stacks \\ from Read/Write Operations}
\titlerunning{Relaxed Queues and Stacks from Read/Write Operations}
\author{Armando Casta\~neda}{Instituto de Matem\'aticas, UNAM}{armando.castaneda@im.unam.mx}{}{Supported by UNAM-PAPIIT project IN108720.}
\author{Sergio Rajsbaum}{Instituto de Matem\'aticas, UNAM}{rajsbaum@matem.unam.mx}{}{Supported by UNAM-PAPIIT project IN106520.}
\author{Michel Raynal}{Institut Universitaire de France, IRISA-Universit\'e de Rennes and Polytechnic Univ. of Hong Kong}{michel.raynal@irisa.fr}{}{Supported by French ANR project DESCARTES  (16-CE40-0023-03).}
\authorrunning{A. Casta\~neda, S. Rajsbaum and M. Raynal}
\keywords{Asynchrony,
  Correctness condition, Linearizability, Nonblocking, Process crash, 
Relaxed data type, Set-linearizability, Wait-freedom, Work-stealing.} 
\begin{document}

\maketitle

\vspace{-0.2cm}
\begin{abstract}
Considering asynchronous shared memory systems in which any number of
processes may crash, this work identifies and formally defines
relaxations of queues and stacks that can be non-blocking or wait-free
while being implemented using only read/write operations.
Set-linearizability
and Interval-linearizability are used to specify the relaxations formally, and
  precisely identify the subset of executions which preserve the
  original sequential behavior.  The relaxations allow for an item to
  be returned  more than once by different operations, but only in
  case of concurrency; we call such a property \emph{multiplicity}.
  The stack implementation is wait-free, while the queue
  implementation is non-blocking.  
  Interval-linearizability is used to describe a queue with multiplicity, with
  the additional  relaxation that a dequeue operation can return
  \emph{weak-empty}, which means that the queue \emph{might} be empty.
  We present a read/write wait-free interval-linearizable algorithm of 
  a concurrent queue.  As far as we know, this work is the
  first that provides  formalizations of the notions
  of multiplicity and weak-emptiness, which can be implemented on top
  of read/write registers only.
\end{abstract}

\section{Introduction}

In the context of asynchronous crash-prone systems where processes
communicate by accessing a shared memory, linearizable implementations
of concurrent counters, queues, stacks, pools, and other concurrent
data structures~\cite{MSchapter07} need extensive synchronization
among processes, which in turn jeopardizes performance and scalability.
Moreover, it has been formally shown that this cost is sometimes
unavoidable, under various specific
assumptions~\cite{AGHK09,AGHKMV11,EHS12}.  However, often applications
do not require all guarantees offered by a linearizable sequential
specification~\cite{S11}.  Thus, much research has focused on
improving performance of concurrent data structures by relaxing their
semantics.  Furthermore, several works have focused on relaxations for
queues and stacks, achieving significant performance improvements
(e.g., ~\cite{HLHPSKS13,HKPSS13,KPRS12,S11}).

It is impossible however to implement queues and stacks with only
\R/\W operations, without relaxing their specification.  This is
because queues and stacks have consensus number two (i.e. they allow
consensus to be solved among two processes but not three), while the
consensus number of \R/\W operations is only one~\cite{H91}, hence too
weak to wait-free implement queues and stacks.
Thus, atomic {\sf Read-Modify-Write} operations, such as \CAS or \TAS,
are required in any queue or stack implementation.  To the best of our
knowledge, even relaxed versions of queues or stacks have not been
designed that avoid the use of {\sf Read-Modify-Write} operations.

In this article, we are interested in exploring if there are
meaningful relaxations of queues and stacks that can be implemented
using only simple \R/\W operations, namely, if there are non-trivial
relaxations with consensus number one.
Hence, this work is a theoretical investigation of the power of the
crash  \R/\W model for relaxed data structures.

\subsection{Contributions}
We identify and formally define relaxations of queues and stacks that can be
implemented using only \R/\W operations.
We consider queue and stack relaxations with \emph{multiplicity},
 where an item can be extracted by more than one dequeue {or pop}
 operation, instead of exactly once.  
 However,  this may happen only in the presence of concurrent
 operations.  As already argued~\cite{MVS09}, this type of relaxation
 {could} be useful in a wide range of applications, such as parallel
 garbage collection, fixed point computations in program analysis,
 constraint solvers (e.g. SAT solvers), state space search exploration
 in model checking, as well as integer and mixed programming solvers.

 \AC{One of the main challenges
   in designing relaxed data structures lies in  the
difficulty of formally specifying what is meant by ``relaxed specification''.}
To provide a formal specification of our relaxations, we use
\emph{set-linearizability}~\cite{N94} and
\emph{interval-linearizability}~\cite{CRR18}, specification methods
that are useful to specify the behavior of a data structure in
concurrent patterns of operation invocations, instead of only in
sequential patterns.  Using these specification methods, we are able
to precisely state in which executions the relaxed behavior of the
data structure should take place, and demand a strict behavior (not
relaxed), in other executions, especially when operation invocations
are sequential.

\paragraph*{First Contribution:}
We define a \emph{set-concurrent stack with multiplicity}, in which
no items are lost, all items are pushed/popped in LIFO order but an
item can be popped by multiple operations, which are then  concurrent.
We define a \emph{set-concurrent queue with multiplicity} similarly.  In both
cases we present set-linearizable implementations based only on
\R/\W operations.  The stack implementation is wait-free~\cite{H91}, while
the queue implementation is non-blocking~\cite{HW90}.

Our set-concurrent implementations imply \R/\W solutions for
\emph{idempotent work-stealing}~\cite{MVS09} 
and \emph{$k$-FIFO}~\cite{KPRS12} queues and stacks.

\paragraph*{Second Contribution:}

We define an interval-concurrent queue with
a \emph{weak-emptiness check}, which behaves like a classical
sequential queue with the exception that a dequeue operation can
return a control value denoted \emph{weak-empty}.  Intuitively, this
value means that the operation was concurrent with dequeue operations
that took the items that were in the queue when it started, thus the
queue might be empty.  First, we describe a wait-free
interval-linearizable implementation based on \FAI and \SWAP operations.
Then, using the techniques in our set-linearizable stack and queue
implementations, we obtain a wait-free interval-linearizable
implementation using only \R/\W operations.

Our interval-concurrent queue with weak-emptiness check is motivated
by a theoretical question that has been open for more than two
decades~\cite{AWW93}: it is unknown if there is a wait-free
linearizable queue implementation based on objects with consensus
number two (e.g.~\FAI or \SWAP), for any number of processes.  There
are only such non-blocking implementations in the literature, or
wait-free implementations for restricted cases
(e.g.~\cite{ACH18,E09,L01,M04,DBF05}).  Interestingly, our
interval-concurrent queue allows us to go from non-blocking to
wait-freedom.  Our interval-concurrent queue models the
\emph{tail-chasing} problem that one faces when trying to obtain a
wait-free queue implementation from objects with consensus number two.

Since we are interested in the computability power of \R/\W
  operations to implement relaxed concurrent objects (that otherwise
  are impossible), our algorithms are presented in an idealized
  shared-memory computational model.  
\AC{We hope these algorithms will help to develop a better understanding of
  fundamentals that can derive solutions for real multicore architectures,
  with good  performance and scalability.}

\subsection{Related Work}

It has been frequently pointed out that classic concurrent data
structures have to be relaxed in order to support scalability, and
examples are known showing how natural relaxations on the ordering
guarantees of queues or stacks can result in higher performance and
greater scalability~\cite{S11}.  Thus, for the past ten years there
has been a surge of interest in relaxed concurrent data structures
from practitioners (e.g.~\cite{NLP2013sosp}).  Also, theoreticians
have identified inherent limitations in achieving high scalability in
the implementation of linearizable
objects~\cite{AGHK09,AGHKMV11,EHS12}.

Some articles relax the sequential specification of traditional data
structures, while others relax their correctness condition
requirements.  As an example of relaxing the requirement of a
sequential data structure,~\cite{HKPSS13,KLP13,KPRS12,PRKS11} present
a \emph{$k$-FIFO} queue (called \emph{out-of-order} in~\cite{HKPSS13})
in which elements may be dequeued out of FIFO order up to a constant
$k\geq 0$.  A family of relaxed queues and stacks is introduced
in~\cite{ST16}, and studied from a computability point of view
(consensus numbers).  
It is defined in \cite{HKPSS13} the \emph{$k$-stuttering} relaxation
of a queue/stack, where an item can be returned by a dequeue/pop
operation without actually removing the item, up to $k \geq 0$ times,
even in sequential executions.  Our queue/stack with multiplicity is a
stronger version of $k$-stuttering, 
\AC{in the sense that an item can be returned by two operations
if and only if the operations are concurrent}.  Relaxed priority queues (in the
flavor of~\cite{ST16}) and associated performance experiments are
presented in~\cite{AKLS15,ZMS19}.
 
Other works design a weakening of the consistency condition. For instance,  
{\it quasi-linearizability}~\cite{AKY10}, which models
relaxed data structures through a distance function from valid
sequential executions. This work provides examples of
quasi-linearizable concurrent implementations that outperform state of
the art standard implementations. 
A \emph{quantitative} relaxation framework to formally specify relaxed
objects is introduced in~\cite{HLHPSKS13,HKPSS13} where relaxed
queues, stacks and priority queues are studied.  This framework is
more powerful than quasi-linearizability.
It is shown in~\cite{TW18} that linearizability and three data type
relaxations studied in~\cite{HKPSS13}, $k$-Out-of-Order, $k$-Lateness,
and $k$-Stuttering, can {also be} defined as consistency conditions.
The notion of {\it local linearizability} is introduced
in~\cite{HHHKLPSSV16}. It is a relaxed consistency condition
that is applicable to container-type concurrent data structures
like pools, queues, and stacks.
{
The notion of \emph{distributional linearizability}~\cite{AKLN18}
captures \emph{randomized} relaxations.
This formalism is applied to MultiQueues~\cite{RSD15}, a family of
concurrent data structures implementing relaxed concurrent
priority queues.
}

\AC{The previous works use relaxed specifications,
but still sequential, while we relax the specification to make it
concurrent (using set-linearizability and interval-linearizability).}

The notion of {\it idempotent work stealing} is
introduced in~\cite{MVS09}, where LIFO, FIFO and
double-ended set implementations are presented;
 these implementations exploit the relaxed semantics to
deliver better performance than usual work stealing algorithms.  
Similarly to our queues and stacks with multiplicity,
the \emph{idempotent} relaxation means that each
inserted item is eventually extracted at least once, instead of
exactly once. In contrast to our work, the algorithms presented
in~\cite{MVS09} use {\sf Compare\&Swap} (in the {\sf Steal} operation).
Being a practical-oriented work, formal specifications of the
implemented data structures are not given.

\subsection{Organization} 

The article is organized as follows.
Section~\ref{sec-preliminaries} presents the model of computations
and the
correctness conditions, namely,  linearizability, set-linearizability and
interval-linearizability.  Section~\ref{sec-set-stack} introduces the
notion of set-concurrent stack with multiplicity and presents a
read/wait wait-free solution of it, while Section~\ref{sec-set-queue}
defines the set-concurrent queue with multiplicity and shows a
non-blocking wait/free implementation.  Some consequences of the
set-concurrent queue and stack implementations are discussed in
Section~\ref{sec-implications}.  The new interval-concurrent queue
with weak-emptiness check and its implementation are presented in
Section~\ref{sec-int-seq-queue}.  Section~\ref{sec-final-discussion}
concludes the paper with a final discussion.

\section{Preliminaries}
\label{sec-preliminaries}

\subsection{Model of Computation}

We consider the standard concurrent system model with $n$
\emph{asynchronous} processes, $p_1, \hdots, p_n$, which may
\emph{crash} at any time during an execution, namely, a process that
crashes stops taking steps.  The \emph{index} of process $p_i$ is $i$.
Processes communicate with each other by invoking \emph{atomic}
operations on shared \emph{base objects}.  A base object can provide
atomic \R/\W operations (such an object is henceforth called a
\emph{register}), or more powerful atomic {\sf Read-Modify-Write}
operations, such as \FAI, \SWAP or \CAS.

The operation $R.\SWAP(x)$ atomically reads the current value of $R$,
sets its value to $x$ and returns $R$'s old value.  The operation
$R.\FAI()$ atomically adds~$1$ to the current value of $R$ and returns
the previous value. The operation $R.\CAS(new, old)$ is a conditional
replacement operation that atomically checks if the current value of
$R$ is equal to $old$, and if so, replaces it with $new$ and returns
{\sf true}; otherwise, $R$ remains unchanged and the operation returns
{\sf false}.

A \emph{(high-level) concurrent object}, or \emph{data type}, is,
roughly speaking, defined by a state machine consisting of a set of
states, a finite set of operations, and a set of transitions between
states. The specification does not necessarily have to be
\emph{sequential}, namely, (1) a state might have pending operations
and (2) state transitions might involve several invocations.  The
following subsections formalize this notion and the different types of
objects. 

An \emph{implementation} of a concurrent object $T$ is a distributed algorithm
$\mathcal A$ consisting of local state machines $A_1, \hdots, A_n$.  Local
machine $A_i$ specifies which operations on base objects $p_i$
executes in order to return a response when it invokes a high-level
operation of $T$.  Each of these base objects operation invocations is
a \emph{step}.

An \emph{execution} of $\mathcal A$ is a possibly infinite sequence of
steps, namely, executions of base objects operations, plus invocations
and responses to high-level operations of the concurrent object $T$,
with the following properties:

\begin{enumerate}

\item Each process
  is sequential. It first invokes a high-level operation, and only when
  it has a corresponding response, it can invoke another high-level
  operation, i.e., executions are \emph{well-formed}.

\item For any invocation to an operation {\sf op}, denoted $inv({\sf op})$, of a process $p_i$,
  the steps of $p_i$ between that invocation and its corresponding
  response (if there is one), denoted $res({\sf op})$, are steps that
  are specified by $\mathcal A$ when $p_i$ invokes $\sf op$.
\end{enumerate}

An operation in an execution is \emph{complete} if
both its invocation and response appear in the execution.
An operation is \emph{pending}
if only its invocation appears in the execution.
A process is \emph{correct} in an execution if it takes infinitely many steps.
For sake of simplicity, and without loss of generality, 
we identify the invocation of an operation with its first step, and 
its response with its last step.

In subsequent sections, we will formally define and implement relaxed 
versions the classical queues and stacks. For sake of simplicity, 
and without loss of generality, we will suppose that in every execution an
item can be enqueued/pushed at most once.

An implementation is \emph{wait-free} if every process completes each
 each operation it invokes~\cite{H91}.  An
implementation is \emph{non-blocking} if whenever processes take
steps and at least one of them does not crash,
at least one of the operations terminates~\cite{HW90}
(formally, in every infinite execution, infinitely many invocations are
completed).  Thus, a wait-free implementation is non-blocking but not
necessarily vice versa.

The \emph{consensus number} of a shared object $O$ is the maximum
number of processes that can solve the well-known \emph{consensus}
problem, using any number of instances of $O$ in addition to any
number of \R/\W registers~\cite{H91}.  Consensus numbers induce the
\emph{consensus hierarchy} where objects are classified according
their consensus numbers.  The simple \R/\W operations stand at the
bottom of the hierarchy, with consensus number one; these operations
are the least expensive ones in real multicore architectures.  At the
top of the hierarchy we find operations with infinite consensus
number, like \CAS, that provide the maximum possible coordination.

\subsection{The Linearizability Correctness Condition}

\emph{Linearizability}~\cite{HW90} is the standard notion used to define a
correct concurrent implementation of an object defined by a sequential 
specification (see below). 
Intuitively, an execution is linearizable if operations can be
ordered sequentially, without reordering non-overlapping operations,
so that their responses satisfy the specification of the implemented object
(see below).

A \emph{sequential specification} of a concurrent object $T$ is a
state machine specified through a transition function $\delta$. Given
a state $q $ and an invocation $inv({\sf op})$, $\delta(q, inv({\sf
  op}))$ returns the tuple $(q', res({\sf op}))$ (or a set of tuples
if the machine is \emph{non-deterministic}) indicating that the
machine moves to state $q'$ and the response to $\sf op$ is $res({\sf op}$).
\AC{In our specifications,  $res({\sf op})$ is written as a tuple $\langle {\sf op} : r \rangle$,
where $r$ is the output value of the operation.}
The sequences of invocation-response tuples, $\langle
inv({\sf op}): res({\sf op}) \rangle$, produced by the state machine
are its \emph{sequential executions}.

To formalize linearizability we define a partial order $<_\alpha$ on
the completed operations of an execution $\alpha$:
${\sf op} <_\alpha {\sf op}'$ if and only if $res({\sf op})$ precedes
$inv({\sf op}')$ in $\alpha$.
Two operations are \emph{concurrent},
denoted ${\sf op} ||_\alpha {\sf op}'$, if they are incomparable by $<_\alpha$.
The execution is \emph{sequential} if $<_\alpha$ is a total order.

\begin{definition}[Linearizability]
Let $\mathcal A$ be an implementation of a concurrent object $T$.
An execution $\alpha$ of $\mathcal A$ is \emph{linearizable} if there is
a sequential execution $S$ of $T$ such that
\begin{enumerate}

\item
$S$ contains every completed operation of $\alpha$ and might contain
  some pending operations.  Inputs and outputs of invocations and
  responses in $S$ agree with inputs and outputs in $\alpha$.

\item
For every two completed operations {\sf op} and ${\sf op}'$ in
$\alpha$, if ${\sf op} <_\alpha {\sf op}'$, then {\sf op} appears
before ${\sf op}'$ in $S$.
\end{enumerate}

We say that $\mathcal A$
is \emph{linearizable} if each of its executions is linearizable.
\end{definition}

\subsection{The Set-Linearizability Correctness Condition}

To formally specify our relaxed queues and stacks, we use the
formalism provided by the set-linearizability and
interval-linearizability consistency
conditions~\cite{CRR18,N94}.  
Roughly speaking,
set-linearizability allows us to linearize several operations in the same
point, namely, all these operations are executed concurrently, while
interval-linearizability allows operations to be linearized
concurrently with several non-concurrent
operations. Figure~\ref{fig-example-linear} schematizes the
differences between the three consistency conditions where each
double-end arrow represents an operation execution.
It is known that set-linearizability has strictly more expressiveness power than
linearizability, and interval-linearizability is strictly more
powerful than set-linearizability Moreover, as linearizability, both
set-linearizability and  interval-linearizability
are \emph{composable} (also called \emph{local})~\cite{CRR18}.

A \emph{set-concurrent specification} of a concurrent object differs
from a sequential execution in that $\delta$ receives as input the
current state $q$ of the machine and a set 
$Inv = \{ inv({\sf op_1}), \hdots, inv({\sf op}_t) \}$ of operation invocations,
\AC{and $\delta(q, Inv)$ returns $(q', Res)$, where $q'$ is
the next state and $Res = \{ res({\sf op_1}), \hdots, res({\sf op}_t)\}$ 
are the responses to the invocations in $Inv$.
Intuitively, all operations ${\sf op_1}, \hdots, {\sf op}_t$ are performed
concurrently and move the machine from state $q$ to $q'$.}
The sets $Inv$ and
$Res$ are called \emph{concurrency classes}.  
\AC{Observe that a set-concurrent specification in which 
all concurrency classes have a single element corresponds to a sequential specification.}

\begin{figure}[ht]
\begin{center}
\vspace{0.4cm}
\includegraphics[scale=0.7]{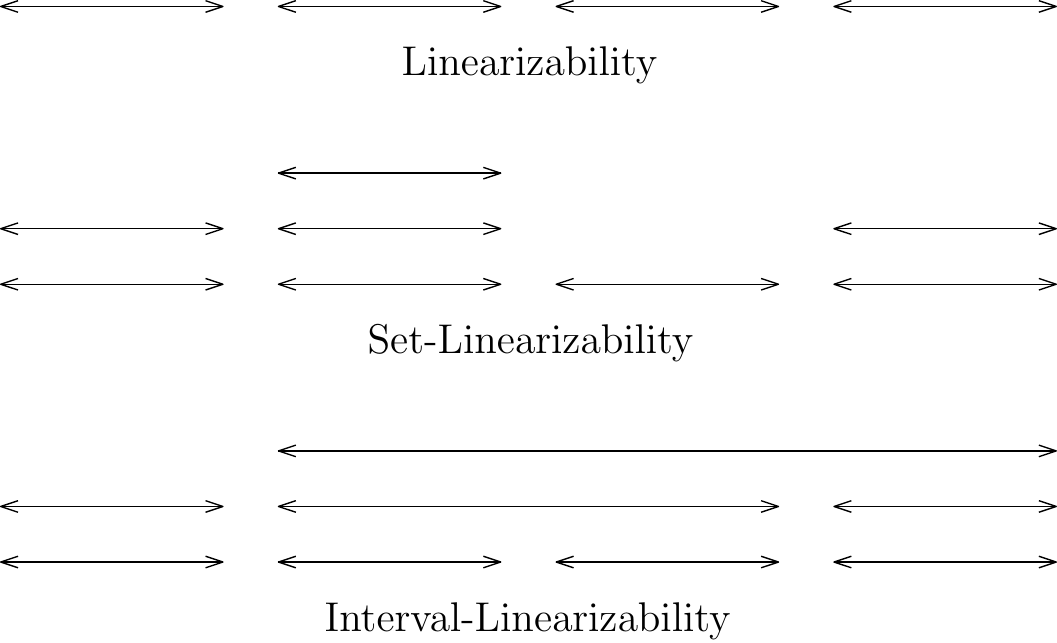}
\caption{\small Linearizability requires a total order
  on the operations, set-linearizability allows several operations to
  be linearized at the same linearization point, while 
  interval-linearizability allows an operation to be decomposed into
  several linearization points.
}
\label{fig-example-linear}
\end{center}
\end{figure}

\begin{definition}[Set-linearizability]
Let $\mathcal A$ be an implementation of a concurrent object $T$.
An execution $\alpha$ of $\mathcal A$ is \emph{set-linearizable} if there is
a set-concurrent execution $S$ of $T$ such that
\begin{enumerate}

\item
$S$ contains every completed operation of $\alpha$ and might contain
  some pending operations.  Inputs and outputs of invocations and
  responses in $S$ agree with inputs and outputs in $\alpha$.

\item
For every two completed operations {\sf op} and ${\sf op}'$ in
$\alpha$, if ${\sf op} <_\alpha {\sf op}'$, then {\sf op} appears
before ${\sf op}'$ in $S$.
\end{enumerate}

We say that $\mathcal A$
is \emph{set-linearizable} if each of its executions is set-linearizable.
\end{definition}

\subsection{The Interval-Linearizability Correctness Condition}

In an \emph{interval-concurrent specification}, some operations might be
pending in a given state~$q$, 
\AC{namely, the state records that there is an operation of a process without response.} 
We now have that in $(q', Res) = \delta(q, Inv)$,
some of the operations that are  pending in $q$ might still be pending in $q'$
and operations invoked in
$Inv$ may be  pending in $q'$, therefore $Res$ contains the responses to the
operations that are completed when moving from $q$ to $q'$.

\begin{definition}[Interval-linearizability]
Let $\mathcal A$ be an implementation of a concurrent object $T$.  An execution
$\alpha$ of $\mathcal A$ is \emph{interval-linearizable} if there is an
interval-concurrent execution $S$ of $T$ such that
\begin{enumerate}

\item
$S$ contains every completed operation of $\alpha$ and might contain
  some pending operations.  Inputs and outputs of invocations and
  responses in $S$ agree with inputs and outputs in $\alpha$.

\item
For every two completed operations {\sf op} and ${\sf op}'$ in
$\alpha$, if ${\sf op} <_\alpha {\sf op}'$, then {\sf op} appears
before ${\sf op}'$ in $S$.
\end{enumerate}

We say that $\mathcal A$ is \emph{interval-linearizable} if each of its
executions is interval-linearizable.
\end{definition}

\section{Set-Concurrent Stacks with Multiplicity}
\label{sec-set-stack}

By the \emph{universality} of consensus~\cite{H91}, we know that, for
every concurrent object there is a linearizable wait-free
implementation of it, for any number of processes, using \R/\W
registers and base objects with consensus number~$\infty$,
e.g. \CAS~\cite{HS08,R13,T06}.  However, the resulting implementation
might not be efficient because first, as it is universal, the
construction does not exploit the 
semantics of the particular object, and \CAS 
may be an expensive base operation. Moreover, such an approach would 
prevent us from investigating the power and the limit of the  \R/\W world, 
(as it was done for Snapshot object for which there are several
linearizable wait-free  \R/\W efficient implementations,
e.g.~\cite{AADGMS93, AACE15,AHR95,IR12}), 
and find  accordingly meaningfull  \R/\W-based  specifications of relaxed 
sequential specifications with efficient implmentations.

\subsection{A Wait-free Linearizable Stack from Consensus Number Two}
Afek, Gafni and Morisson
proposed in~\cite{AGM07} a simple linearizable wait-free
stack implementation for $ n \geq 2$ processes, using \FAI and \TAS
base objects, whose consensus number is 2.  Figure~\ref{figure:stack}
contains a slight variant of this algorithm that uses \SWAP and
\emph{readable} \FAI objects, both with consensus number
2.\footnote{The authors themselves explain  in~\cite{AGM07}
  how to replace \TAS with
  \SWAP.}  A \Push operation reserves a slot in $Item$ by atomically
reading and incrementing $Top$ (Line~\ref{M01}) and then places its
item in the corresponding position (Line~\ref{M02}). A \Pop operation
simply reads the $Top$ of the stack (Line~\ref{M04}) and scans down
$Items$ from that position (Line~\ref{M05}), trying to obtain an item
with the help of a \SWAP operation (Lines~\ref{M06} and~\ref{M07});
if the operation cannot get a item (a non-$\bot$ value), it returns empty
(Line~\ref{M09}). In what follows, we call this implementation \SeqStack.
It is worth {mentioning} that, although \SeqStack has a simple structure,
its linearizability proof is far from trivial,
the difficult part being  proving that items are taken in LIFO order.

In a formal sense, \SeqStack is the best we can do, from the
perspective of the consensus hierarchy: if there were a wait-free
(or non-blocking) linearizable implementation based only on \R/\W
registers, we could solve consensus among two processes in the
standard way, by popping a value from the stack initialized to a
single item containing a predefined value ${\tt winner}$; this is a
contradiction as consensus cannot be solved from \R/\W
registers~\cite{HS08,R13,T06}.  Therefore, there is no \emph{exact}
wait-free linearizable stack implementation from \R/\W registers only.
However, we could search for \emph{approximate} solutions.

Below, we show a formal definition of the notion of a \emph{relaxed}
set-concurrent stack and prove that it can be wait-free implemented
from \R/\W registers.  Informally, our solution consists in
implementing relaxed versions of \FAI and \SWAP with \R/\W registers, and
plug these implementations in \SeqStack.

\begin{figure}[ht]
\centering{ \fbox{
\begin{minipage}[t]{150mm}
\scriptsize
\renewcommand{\baselinestretch}{2.5} \resetline
\begin{tabbing}
aaaaa\=aaa\=aaa\=aaa\=aaa\=aaa\=aaa\=\kill 

{\bf Shared Variables:}\\

$~~$ $Top:$ {\sf Fetch\&Inc} base object initialized to 1\\

$~~$  $Items[1, \hdots ]:$ array of \SWAP base objects initialized to $\bot$\\ \\

{\bf Operation}  $\Push(x_i)$ {\bf is} \\

\line{M01} \> $top_i \leftarrow Top.\FAI()$\\

\line{M02} \> $Items[top_i].\W(x_i)$\\

\line{M03} \> {\bf {\sf return} } {\sf true}\\

{\bf end} \Push\\ \\

{\bf Operation}  $\Pop()$ {\bf is} \\

\line{M04} \> $top_i \leftarrow Top.\R() - 1$\\

\line{M05} \> {\bf for} $r_i \leftarrow top_i$ {\bf down to} 1 {\bf do}\\

\line{M06}  \>\> $x_i \leftarrow Items[r_i].\SWAP(\bot)$\\

\line{M07} \>\> {\bf if} $x_i \neq \bot$ {\bf then {\sf return}} $x_i$ {\bf end if} \\

\line{M08} \> {\bf end for}\\

\line{M09} \> {\bf {\sf return}} $\epsilon$\\

{\bf end} \Pop

\end{tabbing}
\end{minipage}
}
\caption{\small Stack implementation \SeqStack of Afek, Gafni and Morisson~\cite{AGM07} (code for process $p_i$).}
\label{figure:stack}
}
\end{figure}

\subsection{A Set-linearizable \R/\W Stack with Multiplicity}

Roughly speaking, our relaxed stack allows concurrent \Pop operations
to obtain the same item, but all items are returned in LIFO order, and
no pushed item is lost.  Formally, our set-concurrent stack is
specified as follows:

\begin{definition}[Set-Concurrent Stack with Multiplicity]
\label{def-set-stack}
The universe of items that can be pushed is $\mathbf{N} = \{ 1, 2,
\hdots \}$, and the set of states $Q$ is the infinite set of strings
$\mathbf{N}^*$.  The initial state is the empty string,
denoted~$\epsilon$.  In state $q$, the first element in $q$ represents
the top of the stack, which might be empty if $q$ is the empty string.
The transitions are the following:
\begin{enumerate}

\item For $q \in Q$, $\delta(q, \Push(x)) = (x \cdot q, \langle \Push(x) : {\sf true} \rangle)$.

\item For $q \in Q$, $1\leq t\leq n$ and $x \in \mathbf{N}:$
$\delta(x \cdot q, \{ \Pop_1(), \hdots, \Pop_t() \}) = (q, \{ \langle \Pop_1(): x \rangle, \hdots, \langle \Pop_t(): x \rangle \})$.

\item $\delta(\epsilon, \Pop() ) = (\epsilon, \langle \Pop(): \epsilon \rangle)$.
\end{enumerate}
\end{definition}

\begin{remark}
Every execution of the set-concurrent stack with all its concurrency
classes containing a single operation is an execution of the
sequential stack.
\end{remark}

The following lemma shows that any algorithm implementing the
set-concurrent stack keeps the behavior of a sequential stack in
several cases. In fact, the only reason the implementation does not
provide linearizability is due only to the \Pop operations that are
concurrent.

\begin{lemma}
\label{lemma-props-set-seq-stack}
Let $A$ be any set-linearizable implementation of the set-concurrent
stack with multiplicity.  Then,
\begin{enumerate}

\item All sequential executions of $A$ are executions of the sequential stack.

\item All executions with no concurrent \Pop operations are
  linearizable with respect to the sequential stack.

\item All executions with \Pop operations returning distinct values
  are linearizable with respect to the sequential stack.

\item If \Pop operations return the same value in an execution,
  then they are concurrent.
\end{enumerate}
\end{lemma}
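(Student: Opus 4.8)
The plan is to reduce all four claims to a single structural analysis of the set-concurrent execution $S$ that witnesses set-linearizability of a given execution $\alpha$ of $A$. First I would record two facts about any set-concurrent execution $S$ of the stack of Definition~\ref{def-set-stack}, read directly off its transition function. \emph{(i)} Every concurrency class of $S$ is either a singleton containing one \Push, a singleton containing one \Pop that returns $\epsilon$, or a class consisting exclusively of \Pop operations all of which return the same item $x \in \mathbf{N}$; this is immediate since the three transition rules are the only ones available and only the second puts more than one operation in a class. \emph{(ii)} Each item $x \in \mathbf{N}$ is returned by at most one concurrency class of $S$. For \emph{(ii)} I would use that each state is a string in which, by the push-at-most-once assumption, every item occurs at most once; since a \Pop-class removes $x$ from the top and $x$ is never pushed again, $x$ cannot reappear in any later state, so the unique class that pops $x$ is the only one that can return it.

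Next I would prove the bridge observation connecting $S$ back to $\alpha$: if two operations lie in the same concurrency class of $S$, then they are concurrent in $\alpha$. This is just the contrapositive of the second set-linearizability requirement: if ${\sf op} <_\alpha {\sf op}'$ then ${\sf op}$ precedes ${\sf op}'$ in $S$, hence they occupy different classes, so two operations sharing a class can be ordered neither way and are therefore concurrent.

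With these in hand the four claims fall out uniformly. For the fourth claim, if two \Pop operations return the same item $x \in \mathbf{N}$, fact \emph{(ii)} forces them into the same concurrency class of $S$, and the bridge makes them concurrent in $\alpha$. For the first three claims I would argue in each case that the witnessing $S$ must have only singleton concurrency classes, whence by the Remark $S$ is a sequential execution of the stack that respects $<_\alpha$, giving linearizability (and, when $\alpha$ is itself sequential, showing that $\alpha$ listed along its total order is literally a sequential-stack execution, which is the first claim). Concretely: if $\alpha$ is sequential then every pair of operations is comparable, so by the bridge no class can be non-singleton; if $\alpha$ has no concurrent \Pop operations, fact \emph{(i)} says any non-singleton class is all-\Pop, and the bridge would make two such \Pop operations concurrent, a contradiction; and if all \Pop operations return distinct values, fact \emph{(i)} again rules out an all-\Pop class returning a common item, leaving only singletons.

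The routine parts are the appeals to the transition rules; the one step needing care is fact \emph{(ii)}, the ``each item is popped by a single class'' property, since it is what actually uses the push-at-most-once assumption and the monotone shrinking of the state string. Everything else is bookkeeping over the set-linearizability definition, so I expect fact \emph{(ii)} --- and, relatedly, remembering to read the fourth claim as concerning genuine items $x \in \mathbf{N}$ rather than the $\epsilon$ response, which a singleton class can legitimately produce twice in a purely sequential run --- to be the only real obstacle.
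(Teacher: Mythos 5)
Your proposal is correct and follows essentially the same route as the paper: both arguments analyze the concurrency classes of the witnessing set-concurrent execution, show that under the hypotheses of items 1--3 every class must be a singleton (so the witness is a sequential-stack execution respecting $<_\alpha$), and derive item 4 from the fact that an item is popped by a single class together with the order-preservation clause of set-linearizability. Your explicit ``bridge'' observation and fact \emph{(ii)} in fact spell out item 4 more cleanly than the paper's own (somewhat garbled) one-line justification of that case.
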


\begin{proof}
Consider any execution $E$ of $A$ and a set-linearization $\SetLin(E)$
of it.  The definition of the set-concurrent stack implies that if all
concurrency classes in $\SetLin(E)$ have a single operation, then
$\SetLin(E)$ is an execution of the sequential stack.  We prove each
item separately:
\begin{enumerate}

\item Since $E$ is sequential, all concurrency classes in $\SetLin(E)$
  have a single operation.

\item If there are no concurrent \Pop operations in $E$, then every
  concurrency class of $\SetLin(E)$ contains at most one \Pop
  operation. By the specification of the set-sequential stack, every
  \Push operation appears alone in its concurrency class.  Thus, every
  concurrency class of $\SetLin(E)$ contains a single operation.

\item A similar reasoning implies that every concurrency class of
  $\SetLin(E)$ contains a single operation.

\item If any pair of \Pop return distinct values, then, the definition
  of the set-concurrent stack implies that every \Pop operation
  appears alone in its concurrency class. As observed before, the
  definition of the objects also implies that the same happens with
  \Push operations.  Thus, every concurrency class of $\SetLin(E)$
  contains a single operation.
\end{enumerate}

\end{proof}

\begin{figure}[ht]
\centering{ \fbox{
\begin{minipage}[t]{150mm}
\scriptsize
\renewcommand{\baselinestretch}{2.5} \resetline
\begin{tabbing}
aaaaa\=aaa\=aaa\=aaa\=aaa\=aaa\=aaa\=\kill 

{\bf Shared Variables:}\\

$~~$ $Top:$ \R/\W wait-free linearizable \COUNT base object initialized to 1\\

$~~$ $Items[1, \hdots ][1, \hdots, n]:$ array of \R/\W registers initialized to $\bot$\\ \\

{\bf Operation}  $\Push(x)$ {\bf is} \\

\line{L01} \> $top_i \leftarrow Top.\R()$\\

\line{L02} \> $Top.\INC()$\\

\line{L04} \> $Items[top_i, i].\W(x)$\\

\line{L05} \> {\bf {\sf return} } {\sf true}\\

{\bf end} \Push\\ \\

{\bf Operation}  $\Pop()$ {\bf is}  \\

\line{L06} \> $top_i \leftarrow Top.\R() - 1$\\

\line{L07} \> {\bf for} $r_i \leftarrow top_i$ {\bf down to} 1 {\bf do}\\

\line{L08} \> \> {\bf for} $s_i \leftarrow n$ {\bf down to} 1 {\bf do}\\

\line{L09}  \> \> \> $x_i \leftarrow Items[r_i][s_i].\R()$\\

\line{L11}  \> \> \> {\bf if} $x_i \neq \bot$ {\bf then}\\

\line{L10}  \> \> \> \> $Items[r_i][s_i].\W(\bot)$\\

\line{L11'}  \> \> \> \> {\bf {\sf return}} $x_i$\\

\line{L11''}  \> \> \> {\bf end if}\\

\line{L12} \>  \>{\bf end for}\\

\line{L13} \> {\bf end for}\\

\line{L14} \> {\bf {\sf return}} $\epsilon$\\

{\bf end} \Pop

\end{tabbing}
\end{minipage}
}
  \caption{\small  \R/\W wait-free set-concurrent stack
                   \SetSeqStack with multiplicity (code for process $p_i$).}
\label{figure:set-seq-stack}
}
\end{figure}

The algorithm in Figure~\ref{figure:set-seq-stack} is a
set-linearizable \R/\W wait-free implementation of the stack with
multiplicity, which we call \SetSeqStack.  This implementation is a
modification of \SeqStack. The \FAI operation in Line~\ref{M01} in
\SeqStack is replaced by a \R and \INC operations of a \R/\W wait-free
linearizable \COUNT, in Lines~\ref{L01} and~\ref{L02} in \SetSeqStack.
This causes a problem as two \Push operations can set the same value
in their $top_i$ local variables.  This problem is resolved with the
help of a two-dimensional array $Items$ in Line~\ref{L04}, which
guarantees that no pushed item is lost:
\AC{each row of $Items$ now has $n$ entries, each of them
  associated with one ond only one process.}
Similarly, the \SWAP
operation in Line~\ref{M06} in \SeqStack is replaced by \R and \W
operations in Lines~\ref{L09} and~\ref{L10} in \SetSeqStack, together
with the test in Line~\ref{L11} which ensures that a \Pop operation
modifies an entry in $Items$ only if an item has been written in it.
Thus, it is now possible that two distinct \Pop operations get the
same non-$\bot$ value, which is fine because this can only happen if
the operations are concurrent.  Object $Top$ in \SetSeqStack can be
any of the known \R/\W wait-free linearizable \COUNT
implementations\footnote{To the best of our knowledge, the best
  implementation is in~\cite{AAC12} with polylogarithmic step
  complexity, on the number of processes, provided that the number of
  increments is polynomial.}.

\begin{theorem}
\label{theo-set-seq-stack}
The algorithm \SetSeqStack (Figure~{\em\ref{figure:set-seq-stack}})
is a \R/\W wait-free set-linearizable
implementation of the stack with multiplicity.
\end{theorem}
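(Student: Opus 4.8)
The plan is to establish the three claims of the theorem separately: that the implementation uses only \R/\W operations (immediate, since $Top$ is instantiated by a \R/\W wait-free linearizable \COUNT and $Items$ is an array of plain registers), that it is wait-free, and that it is set-linearizable with respect to Definition~\ref{def-set-stack}. Wait-freedom is the easy part: a \Push performs two operations on the wait-free counter $Top$ plus a single \W, and a \Pop performs one wait-free counter \R followed by the double loop of Lines~\ref{L07}--\ref{L13}, which runs at most $top_i\cdot n$ times with $O(1)$ base steps per iteration; hence every invoked operation returns.

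For set-linearizability I would first fix a linearization $L$ of the base counter $Top$ and record two structural invariants that drive everything else. First, each slot $Items[r][i]$ is written by at most one \Push: only $p_i$ writes column $i$ (Line~\ref{L04}), the row it uses is the value it read from $Top$ at Line~\ref{L01}, and since a process reads then increments $Top$ (Lines~\ref{L01}--\ref{L02}) before its next \Push, successive reads by $p_i$ are strictly increasing in $L$. Second, over the whole execution a slot holds the value pattern $\bot\cdots\bot\, x\, \bot\cdots\bot$ for a single $x$: it can be turned non-$\bot$ only by that unique \Push and can be reset, to $\bot$, only by a \Pop at Line~\ref{L10}, after which no \W ever restores it. These give the monotonicity I need for real time: if \Push $A$ completes before \Push $B$ begins, then $A$'s increment is linearized in $L$ before $B$'s read, so $B$ reads a strictly larger value and $B$'s row exceeds $A$'s row.

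Next I would construct the set-concurrent execution $S$. Each \Push and each $\epsilon$-returning \Pop forms a singleton concurrency class; the \Pop operations that return a common value $x$ are gathered into one class. That this grouping is legitimate requires showing these \Pop operations are pairwise concurrent, which I would verify directly (the algorithmic counterpart of item~4 of Lemma~\ref{lemma-props-set-seq-stack}): since $x$ lives in a single slot $(r,i)$, every \Pop returning $x$ reads it at Line~\ref{L09} strictly before the earliest $\bot$-write $W_{\min}$ to that slot; letting $P_k$ be the \Pop performing $W_{\min}$, one chases the real-time inequalities (each such \Pop's invocation precedes $W_{\min}$, hence precedes $P_k$'s response, while $P_k$'s invocation precedes $W_{\min}$ and hence every other such \Pop's response) to conclude pairwise concurrency. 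I would then place each class at a linearization point---\Push at its \W (Line~\ref{L04}), a returning \Pop-class at the earliest successful read in the class, and an $\epsilon$-\Pop at the counter read of Line~\ref{L06}---and order the classes by these points. Condition~2 of set-linearizability (preservation of $<_\alpha$) then follows from these choices together with the row-monotonicity above.

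The main obstacle is the remaining step: proving that the ordered sequence $S$ is actually a legal sequential execution of the set-concurrent stack, i.e., that each returning \Pop-class removes the current top of the abstract stack and each $\epsilon$-\Pop falls at an empty state. This is the \emph{LIFO} argument, which the text already flags as the difficult part even for \SeqStack. Concretely, I must show that when a \Pop returns the item in slot $(r,i)$, all slots that are strictly higher in the scan order $\prec$ (where $(r',i')\prec(r,i)$ iff $r'<r$, or $r'=r$ and $i'<i$, matching the down-to-$1$ loops of Lines~\ref{L07}--\ref{L08}) are $\bot$ at the moment of its read, and that the items occupying the higher slots are exactly those already popped in $S$---so that no un-popped higher item is ever skipped except through concurrency with a \Pop or \Push. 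I expect this to require a single invariant tying together the counter linearization $L$, the two-dimensional scan order $\prec$, and the per-slot value histories, and this reconciliation of the non-atomic \R/\INC counter with the two-dimensional scan is where the real work lies.
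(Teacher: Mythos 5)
Your wait-freedom argument, the per-slot invariants (each entry of $Items$ is written non-$\bot$ by at most one \Push and, once reset to $\bot$ at Line~\ref{L10}, stays $\bot$ forever), and the pairwise-concurrency argument for \Pop operations returning the same item are all sound and correspond to the remarks at the start of the paper's proof. The gap is that you stop exactly where the theorem actually lives. You concede in your last paragraph that the LIFO legality of your candidate order $S$ --- that each returning \Pop-class removes the current top and each $\epsilon$-\Pop occurs at an empty state --- remains to be shown, and you only state that you ``expect'' a suitable invariant relating the counter linearization, the scan order $\prec$, and the slot histories. That invariant is the entire difficulty: the paper itself stresses that even for \SeqStack (with an atomic \FAI and atomic \SWAP) the LIFO proof is far from trivial, so asserting that legality ``follows from these choices together with the row-monotonicity'' is unsubstantiated. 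In particular, fixed linearization points such as ``\Push at its \W'' and ``\Pop at its earliest successful read'' are exactly the kind of naive assignment that the subtle interleavings of this family of algorithms are known to defeat; you would at minimum have to prove that no un-popped item ever sits $\prec$-above the slot a \Pop takes from at the moment of its read, and nothing in the proposal establishes this.

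The paper avoids re-deriving LIFO altogether by a reduction: starting from $E$ it deletes all but one \Pop per duplicated item, collapses each Line~\ref{L09}/\ref{L10} pair into an atomic \SWAP, permutes the Line~\ref{L01} reads so that all \Push operations sharing a row read $Top$ consecutively, and then flattens the two-dimensional array row-major to obtain an execution $H$ of \SeqStack; a linearization of $H$ (which exists by the known, nontrivial correctness of \SeqStack) is pulled back and the deleted \Pop operations are re-inserted into the concurrency classes of the survivors returning the same item. If you want to complete your direct approach you must actually prove the LIFO invariant; otherwise you should adopt the reduction, in which case your step-reordering claims (that the read/write pair can be made atomic because duplicates were removed, and that the $Top$ reads of a row can be gathered without changing any return value) become the statements that need careful justification.
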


\begin{proof} 
Since all base objects are wait-free, it follows directly from the
code that the implementation is wait-free.  Thus, we focus on proving
that the implementation is set-linearizable.  Let $E$ be any execution
of \SetSeqStack. Since the algorithm is wait-free, there is an
extension of $E$ in which all its pending operations are completed,
and no new operation is started. Any set-linearization of such
extension is a set-linearization of $E$. Thus, without loss of
generality, we can assume all operations in $E$ are completed.

The rest of the proof is a ``reduction'' that
proceeds as follows. First, we modify $E$ and
remove some of its operations to obtain another execution $G$ of the
algorithm.  Then, from $G$, we obtain an execution $H$ of \SeqStack,
and show that we can obtain a set-linearization $\SetLin(G)$ of $G$
from any linearization $\Lin(H)$ of $H$.  Finally, we add to
$\SetLin(G)$ the operations of $E$ that were removed to obtain a
set-linearization $\SetLin(E)$ of $E$.

We start with the following simple remarks:

\begin{remark}
Every \Push operation gets a unique pair $(top_i, i)$, hence no pushed
value is lost.
\end{remark}

\begin{remark}
If two \Push operations store their values at entries in the same row
of $Items$ (both set the same value in $top_i$ in Line~{\em\ref{L02}}),
they are concurrent.
\end{remark}

\begin{remark}
If two \Pop operations return the same value, they are concurrent.
\end{remark}

To obtain the execution $G$ mentioned above, we first obtain
intermediate executions $F$ and then $F'$, from which we derive $G$.

For any value $y \neq \epsilon$ that is returned by more than one \Pop
operation in $E$, we remove from $E$ all these operations
(invocations, responses and steps) except for the first one that
executes Line~\ref{L10}, i.e., the first among these operations that
marks $y$ as taken in $Items$.  Let $F$ be the resulting sequence. We
claim that $F$ is an execution of the algorithm: (1) any \Pop
operation reads $Top$ in Line~\ref{L06} and $Items[r_i][t_i]$ in
Line~\ref{L09}, thus no step of any other operation depends on such a
step of a removed \Pop operation, and (2) $F$ keeps the \Pop operation
that marks first $y$ as taken in Line~\ref{L10}, hence the subsequent
\W steps of the removed \Pop operations are superfluous.  As we only
removed some operations from $E$, the remaining operations in $F$
respect the partial order $<_E$, namely, $<_F \subseteq <_E$.
Since there are no two \Pop
operations in $F$ popping the same item $y \neq \epsilon$, then for
every \Pop operation we can safely move {backward}
each of its steps in
Line~\ref{L10} next to its previous step in Line~\ref{L09} (which
corresponds to the same iteration of the for loop in Line~\ref{L08}).
Thus, for every \Pop operation, Lines~\ref{L09} to~\ref{L10}
correspond to a \SWAP operation.  Let $F'$ denote the resulting
equivalent execution.

We now permute the order of some steps in $F'$ to obtain an execution
$G$ with the same operations and $<_G = <_{F'}$.
For each integer $b \geq 0$, let $t(b) \in [0, \hdots, n]$ be the
number of \Push operations in $F'$ that store their items in row
$Items[b]$. Namely, each of these operations obtains $b$ in its \R steps
in Line~\ref{L01}.  Let $\Push^b_1, \hdots, \Push^b_{t(b)}$ denote all
these operations.  For each $\Push^b_j$, let $x^b_j$ denote the item
the operation pushes, let $e^b_j$ denote its \R step in
Line~\ref{L01}, and let $ind^b_j$ be the index of the process that
performs operation $\Push^b_j$.  Hence, $\Push^b_j$ stores its item
$x^b_j$ in $Items[b][ind^b_j]$ when performs Line~\ref{L04}.  Without
loss of generality, let us suppose that $ind^b_1 < ind^b_2 < \hdots <
ind^b_{t(b)}$.  Observe the following:
\begin{itemize}

\item $\Push^b_1, \hdots, \Push^b_{t(b)}$ are concurrent.

\item By linearizability of $Top$, all $\R$ operations in
  Lines~\ref{L02} and~\ref{L06} read monotonically increasing
  values.  Therefore, for $a < b$, the step $e^a_k$ of any $\Push^a_k$
  appears before the step $e^b_j$ of any $\Push^b_j$.

\item Since all $e^b_1, \hdots, e^b_{t(b)}$ read the same value from
  $Top$, there are no $Top.\INC()$ steps in the shortest sub-string of
  $F'$ containing $e^b_1, \hdots, e^b_{t(b)}$.

\item Let $f^b$ be the step among $e^b_1, \hdots, e^b_{t(b)}$ that
  appears first in $F'$.  Then, the step in Line~\ref{L04} of any
  $\Push^b_j$ appears after $f^b$ in $F'$.
\end{itemize}

\begin{figure}[ht]
\begin{center}
\includegraphics[scale=0.65]{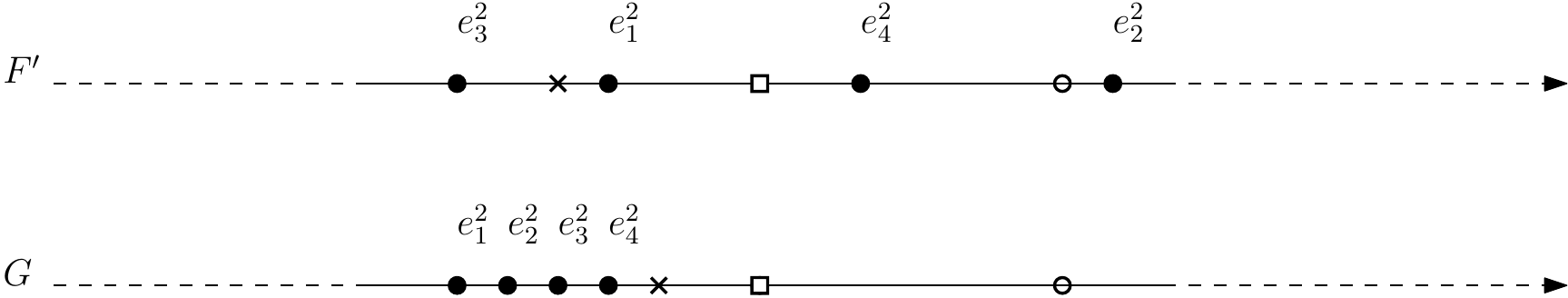}
\caption{\small Graphical description of the transformation from $F'$
  to $G$. In the example, the relative order of other steps
  (represented with the cross, box and empty circle) keep the same
  relative order.}
\label{fig-example-1}
\end{center}
\end{figure}

The last two items imply that moving forward each $e^b_j$ right after
$f^b$ produces another execution equivalent to $F'$.  Thus, we obtain
$G$ by moving forward all steps $e^b_1, \hdots, e^b_{t(b)}$ up to the
position of $f^b$, and place them in that order, $e^b_1, \hdots,
e^b_{t(b)}$, for every $b \geq 0$.  Figure~\ref{fig-example-1} shows a
graphical description of the transformation.  Observe that $<_G = <_{F'}$.

The main observation now is that $G$ already corresponds to an
execution of \SeqStack, if we consider the entries in $Items$ in the
their usual order (first row, then column).  We say that $Items[r][s]$
is \emph{touched} in $G$ if there is a \Push operation that writes its
item in that entry; otherwise, $Items[r][s]$ is \emph{untouched}.
Now, for every $b \geq 0$, in $G$ all $\Push^b_1, \hdots,
\Push^b_{t(b)}$ execute Line~\ref{L01} one right after the other, in
order $e^b_1, \hdots, e^b_{t(b)}$. Also, the items they push appear in
row $Items[b]$ from left to right in order $\Push^b_1, \hdots,
\Push^b_{t(b)}$.  Thus, we can think of the touched entries in row
$Items[b]$ as a column with the left most element at the bottom, and
pile all rows of $Items$ with $Items[0]$ at the
bottom. Figure~\ref{fig-example-2} depicts an example of the
transformation.  In this way, each $e^b_j$ corresponds to a \FAI
operation and every \Pop operations scans the touched entries of
$Items$ in the order \SeqStack does (note that it does not matter if
the operation start scanning in a row of $Items$ with no touched
entries, since untouched entries are immaterial).  Following this
idea, we do the following to obtain an execution $H$ of \SeqStack from
$G$:

\begin{figure}[ht]
\begin{center}
\includegraphics[scale=0.7]{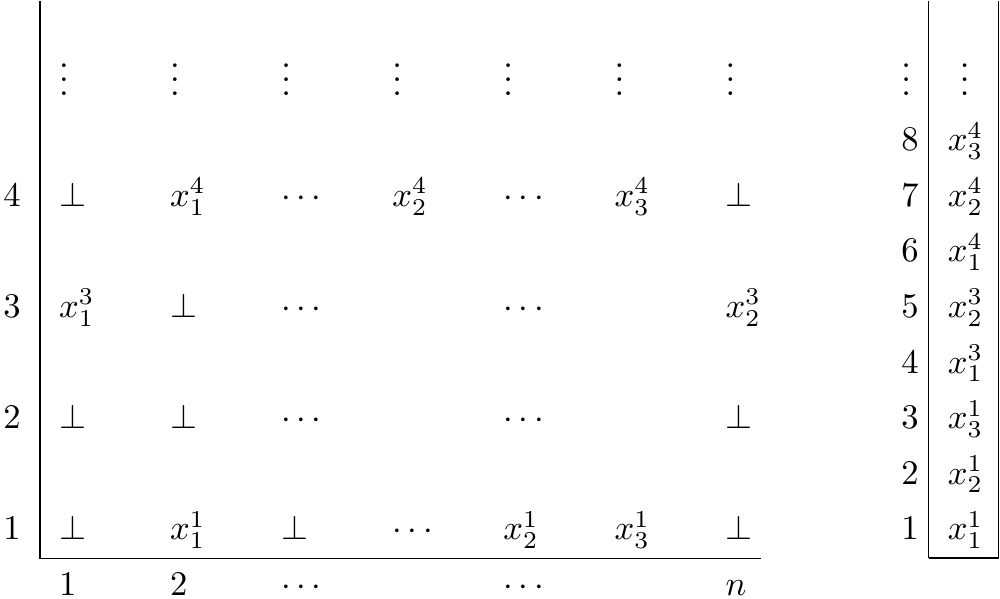}
\caption{\small An example of the codification of the one-dimensional
  array $Items$ of \SeqStack in the two-dimensional array $Items$ in
  \SetSeqStack.  The untouched entries are represented with $\bot$.}
\label{fig-example-2}
\end{center}
\end{figure}

\begin{itemize}
\item For each $\Push^b_j$:
\begin{enumerate}

\item Replace its step in Line~\ref{L01} (denoted $e^b_j$ above) with
  $top_i \leftarrow Top.\FAI()$, corresponding to Line~\ref{M01} of
  \SeqStack (Figure~\ref{figure:stack}).

\item Remove its step in Line~\ref{L02}.

\item Replace its step in Line~\ref{L04} with $Items[top_i].\W(x_i)$,
corresponding to Line~\ref{M02} of \SeqStack (Figure~\ref{figure:stack}).
\end{enumerate}

\item For each $\Pop$ operation:
\begin{enumerate}

\item Note that the step in Line~\ref{L06} directly corresponds to
  Line~\ref{M04} of \SeqStack (Figure~\ref{figure:stack}).  Thus, this
  step remains unchanged.
\item Let $e$ be any of its steps corresponding to Line~\ref{L09}. If
  $e$ reads a touched entry, then replace $e$ with $x_i \leftarrow
  Items[r_i].\SWAP(\bot)$, corresponding to Line~\ref{M06} of
  \SeqStack (Figure~\ref{figure:stack}); otherwise, remove $e$.

\item Remove all its steps corresponding to Line~\ref{L10}. 
\end{enumerate}
\end{itemize}

As already argued, $H$ is an execution of \SeqStack; furthermore, $G$
and $H$ have the same operations and $<_H = <_G$. 
Let $\Lin(H)$ be any linearization of \SeqStack. To conclude the proof of
the theorem, we obtain a set-linearization of $E$ from $\Lin(H)$. 
We have that $H$, $G$ and $F$ have the same operations and $<_H = <_G = <_F$,  
and then $\Lin(H)$ is indeed a set-linearization of $F$ and $G$
with each concurrency class having a single operation.  To obtain a
set-linearization $\SetLin(E)$ of $E$, we put every \Pop operation of
$E$ that is removed to obtain $F$, in the concurrency class of
$\Lin(H)$ with the \Pop operation that returns the same item.
The resulting set-concurrent execution, $\SetLin(E)$, respects $<_E$ because 
any two operations returning the same item are concurrent, as observed at the
beginning of the proof.  Therefore, $\SetLin(E)$ is a
set-linearization of $E$, and hence \SetSeqStack is set-linearizable.
\end{proof}

It is worth observing that indeed it is simple to prove that
\SetSeqStack is an implementation of the \emph{set-concurrent pool
  with multiplicity}, namely, Definition~\ref{def-set-stack} without
LIFO order (i.e. $q$ is a set instead of a string).  
The hard part in the previous proof is the LIFO order,
\AC{which is shown through a reduction to the (nontrivial) linearizability proof of 
\SeqStack~\cite{AGM07}.}

\AC{
\section{A Renaming-based Performance-related Improvement (when contention is small)}
}

When the contention on the shared memory accesses is small, 
a \Pop operation in \SetSeqStack might perform several ``useless'' \R
 operations in Line~\ref{L09}, as it scans all
 entries of $Items$ in every row while trying to get a non-$\bot$
 value, and some of these entries might never store an item in the
 execution (called untouched in the proof of
 Theorem~\ref{theo-set-seq-stack}).  The algorithm in
 Figure~\ref{figure:set-seq-stack-improve} mitigates this issue with the
 help of an array $Ren$ with instances of any \R/\W $f(n)$-adaptive
 renaming.  In \emph{$f(n)$-adaptive renaming}~\cite{ABDPR90},
 each process starts with its index as input
 and obtains a unique name in the space $\{1, \hdots, f(p)\}$, where
 $p$ denotes the number of processes participating in the execution.
 Several adaptive renaming algorithms have been proposed (see
 e.g.~\cite{CRR11}); a good candidate is the simple $(p^2/2)$-adaptive
 renaming algorithm of Moir and Anderson with $O(p)$ individual step
 complexity~\cite{MA95}.

\begin{figure}[ht]
\centering{ \fbox{
\begin{minipage}[t]{150mm}
\scriptsize
\renewcommand{\baselinestretch}{2.5} \resetline
\begin{tabbing}
aaaaa\=aaa\=aaa\=aaa\=aaa\=aaa\=aaa\=\kill 

{\bf Shared Variables:}\\

\> $Top:$ \R/\W wait-free linearizable \COUNT base object initialized to 1\\

\> $\mathit{NOPS}[1, \hdots]:$
array of \R/\W wait-free linearizable \COUNT base objects initialized to 0\\

\> $Ren[1, \hdots ]:$ array of instances of \R/\W $f(n)$-adaptive renaming\\

\> $Items[1, \hdots ][1, \hdots ]:$
           array of \R/\W registers initialized to $\bot$\\ \\

           {\bf Operation}  $\Push(x)$ {\bf is} \\
           
\line{S01} \> $top_i \leftarrow Top.\R()$\\
 
 \line{S02} \> $tiebreaker_i \leftarrow Ren[top_i].\Ren(i)$\\
 
 \line{S03} \> $Top.\INC()$\\
 
 \line{S04} \> $\mathit{NOPS}[top_i].\INC()$\\
 
 \line{S05} \> $Items[top_i, tiebreaker_i].\W(x)$\\
 
 \line{S06} \> {\bf {\sf return} } {\sf true}\\
 
 {\bf end} \Push\\ \\

 {\bf Operation}  $\Pop()$ {\bf is} \\
 
 \line{S07} \> $top_i \leftarrow Top.\R() - 1$\\
 
 \line{S09} \> {\bf for} $r_i \leftarrow top_i$ {\bf down to} $1$ {\bf do}\\
 
 \line{S08} \>  \> $nops_i \leftarrow \mathit{NOPS}[r_i].\R()$\\
 
 \line{S08'} \>  \> $max_i \leftarrow f(nops_i)$\\
 
 \line{S10} \> \> {\bf for} $s_i \leftarrow max_i$ {\bf down to} 1 {\bf do}\\
 
 \line{S11}  \> \> \> $x_i \leftarrow Items[r_i][s_i].\R()$\\
 
 \line{S13}  \> \> \> {\bf if} $x_i \neq \bot$ {\bf then}\\
 
 \line{S12}  \> \> \> \> $Items[r_i][s_i].\W(\bot)$\\
 
 \line{S13'}  \> \> \> \> {\bf {\sf return}} $x_i$\\
 
 \line{S13''}  \> \> \> {\bf end if}\\
 
 \line{S14} \>  \>{\bf end for}\\
 
 \line{S15} \> {\bf end for}\\
 
 \line{S16} \> {\bf {\sf return}} $\epsilon$\\
 
 {\bf end} \Pop
 
 \end{tabbing}
 \end{minipage}
 }
 \caption{\small An improved \R/\W wait-free set-concurrent stack with
   multiplicity (code for process $p_i$).}
 \label{figure:set-seq-stack-improve}
 }
 \end{figure}
 
 \Push operations storing their items in the same row $Items[b]$, 
 which has now infinite length,
 dynamically decide where in the row they store their items, with the
 help of $Ren[b].\Ren(\cdot)$ in Line~\ref{S02}.  Additionally, these
 operations announce in $\mathit{NOPS}[b]$ the number of processes
 that store values in row $Items[b]$, in Line~\ref{S04}, hence helping
 \Pop operations to scan only the segment of $Items[b]$ where there
 might be items, in {\bf for} loop in Line~\ref{S10}.
 The correctness proof of this
 implementation is very similar to the correctness proof of
 \SetSeqStack.  
 
 Note that if the contention is small, say $O(\log^x n)$, 
 every \Pop operation scans only the first entries $O(\log^{2x}
 n)$ of row $Items[b]$ as the processes storing items in that row
 rename in the space $\{1, \hdots, (\log^{2x} n)/2\}$, 
 using the Moir and Anderson $(p^2/2)$-adaptive renaming algorithm.
Finally, observe that $n$ does not to be known in the modified algorithm
(as in \SeqStack).

\section{Set-Concurrent Queues with Multiplicity}
\label{sec-set-queue}

\subsection{A Non-Blocking Linearizable Queue from Consensus Number Two}
\label{LF-lin-queue-CN=2}
We now consider the linearizable queue implementation in
Figure~\ref{figure:queue}, which uses objects with consensus number
two.  The idea of the implementation, which we call \SeqQueue, is
similar to that of \SeqStack in the previous section.  Differently
from \SeqStack, whose operations are wait-free, \SeqQueue has a
wait-free \Enq and a non-blocking \Deq.

\SeqQueue is a slight modification of the non-blocking queue
implementation of Li~\cite{L01}, which in turn is a variation of the
blocking queue implementation of Herlihy and Wing~\cite{HW90}.  Each
\Enq operation simply reserves a slot for its item by performing \FAI
to the tail of the queue, Line~\ref{P01}, and then stores it in
$Items$, Line~\ref{P02}.  A \Deq operation repeatedly tries to obtain
an item scanning $Items$ from position 1 to the tail of the queue
(from its perspective), Line~\ref{P07}; every time it sees an item has
been stored in an entry of $Items$, Lines~\ref{P09} and~\ref{P10}, it
tries to obtain the item by atomically replacing it with $\top$, which
signals that the item stored in that entry has been taken,
Line~\ref{P11}.  While scanning, the operation records the number of
items that has been taken (from its perspective), Line~\ref{P13}, and
if this number is equal to the number of items that were taken in the
previous scan, it declares the queue is empty, Line~\ref{P16}.  For
completeness, the correctness proof of \SeqQueue is in
Appendix~\ref{app-proof-seq-queue}.
Despite its simplicity, \SeqQueue's  linearizability proof is far from trivial.

\begin{figure}[ht]
\centering{ \fbox{
\begin{minipage}[t]{150mm}
\scriptsize
\renewcommand{\baselinestretch}{2.5} \resetline
\begin{tabbing}
aaaaa\=aaa\=aaa\=aaa\=aaa\=aaa\=aaa\=\kill 

{\bf Shared Variables:}\\

\> $Tail:$ {\sf Fetch\&Inc} base object initialized to 1\\

\> $Items[1, \hdots ]:$ array of \SWAP base objects initialized to $\bot$\\ \\

{\bf Operation}  $\Enq(x_i)$ {\bf is} \\

\line{P01} \> $tail_i \leftarrow Tail.\FAI()$\\

\line{P02} \> $Items[tail_i].\W(x_i)$\\

\line{P03} \> {\bf {\sf return} } {\sf true}\\

{\bf end} \Enq\\ \\

{\bf Operation}  $\Deq()$ {\bf is}\\

\line{P04} \> $taken'_i \leftarrow 0$\\

\line{P05} \> {\bf while} {\sf true} {\bf do}\\

\line{P06} \> \> $taken_i \leftarrow 0$\\

\line{P07} \> \> $tail_i \leftarrow Tail.\R()-1$\\

\line{P08} \> \> {\bf for} $r_i \leftarrow 1$ {\bf up to} $tail_i$ {\bf do}\\

\line{P09}  \> \> \> $x_i \leftarrow Items[r_i].\R()$\\

\line{P10}  \> \> \> {\bf if} $x_i \neq \bot$ {\bf then}\\

\line{P11}  \> \> \> \> $x_i \leftarrow Items[r_i].\SWAP(\top)$\\

\line{P12}  \> \> \> \>
            {\bf if} $x_i \neq \top$ {\bf then {\sf return}} $x_i$ {\bf end if}\\

\line{P13}  \> \> \> \> $taken_i \leftarrow taken_i + 1$\\

\line{P14}  \> \> \> {\bf end if}\\

\line{P15} \> \> {\bf end for}\\

\line{P16} \> \> {\bf if} $taken_i = taken'_i$ {\bf then} {\bf {\sf return}} $\epsilon$\\

\line{P17} \> \> $taken'_i \leftarrow taken_i$\\

\line{P18} \> {\bf end while}\\

{\bf end} \Deq

\end{tabbing}
\end{minipage}
}
\caption{\small Non-blocking linearizable queue 
  \SeqQueue from base objects with consensus number 2 (code for $p_i$).}
\label{figure:queue}
}
\end{figure}

Similarly to the case of the stack, \SeqQueue is optimal from the
perspective of the consensus hierarchy as there is no non-blocking
linearizable queue implementation from \R/\W operations only.  However, as
we will show below, we can obtain a \R/\W non-blocking implementation of a
set-concurrent queue with multiplicity.

\subsection{A Set-linearizable \R/\W Queue with Multiplicity}

Our relaxed queue follows a similar idea of that of the set-concurrent
stack in Definition~\ref{def-set-stack}: concurrent \Deq operations
might obtain the same item, but all items are returned in FIFO order,
and no enqueued item is lost.

\begin{definition}[Set-Concurrent Queue with Multiplicity]
\label{def-set-queue}
The universe of items that can be enqueued is $\mathbf{N} = \{ 1, 2,
\hdots \}$, and the set of states $Q$ is the infinite set of strings
$\mathbf{N}^*$.  The initial state is the empty string,
denoted~$\epsilon$.  In state $q$, the first element in $q$ represents
the head of the queue, which might be empty if $q$ is the empty
string.  The transitions are the following:
\begin{enumerate}

\item
  For $q \in Q$, $\delta(q, \Enq(x)) = (q \cdot x, \langle \Enq(x): {\sf true} \rangle)$.

\item For $q \in Q$, $1\leq t\leq n$, $x \in \mathbf{N}:$ 
  $\delta(x \cdot q, \{
  \Deq_1(), \hdots, \Deq_t() \}) = (q, \{ \langle \Deq_1(): x \rangle,
  \hdots,  \langle \Deq_t(): x \rangle \})$.

\item
  $\delta(\epsilon, \Deq() ) = (\epsilon, \langle \Deq(): \epsilon \rangle)$.
\end{enumerate}
\end{definition}

\begin{remark}
Every execution of the set-concurrent queue with all its concurrency
classes containing a single operation is an execution of the
sequential queue.
\end{remark}

The proof of the following lemma is similar to the proof of Lemma~\ref{lemma-props-set-seq-stack}.

\begin{lemma}
\label{lemma-props-set-seq-queue}
Let $A$ be any set-linearizable implementation of the set-concurrent
queue with multiplicity.  Then,
\begin{enumerate}

\item All sequential executions of $A$ are executions of the sequential queue.

\item All executions with no concurrent \Deq operations are
  linearizable with respect to the sequential queue.

\item All executions with \Deq operations returning distinct values
  are linearizable with respect to the sequential queue.

\item If two \Deq operations return the same value in an execution,
  then they are concurrent.
\end{enumerate}
\end{lemma}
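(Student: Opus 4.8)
The plan is to mirror the proof of Lemma~\ref{lemma-props-set-seq-stack} almost verbatim, since the set-concurrent queue specification (Definition~\ref{def-set-queue}) has exactly the same structure as the stack specification (Definition~\ref{def-set-stack}): \Enq operations always appear alone in their concurrency class, a single concurrency class of \Deq operations all return the same item $x$, and an empty-queue \Deq also appears alone. First I would fix an arbitrary execution $E$ of $A$ together with one of its set-linearizations $\SetLin(E)$, whose existence is guaranteed because $A$ is set-linearizable. The key structural observation, read directly off the transition function $\delta$, is that every concurrency class of $\SetLin(E)$ containing more than one operation must consist entirely of \Deq operations all returning the same value $x \neq \epsilon$; in particular, no \Enq operation and no empty-returning \Deq can share a concurrency class with any other operation.

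\begin{proof}
Consider any execution $E$ of $A$ and a set-linearization $\SetLin(E)$ of it. By the definition of the set-concurrent queue, if all concurrency classes in $\SetLin(E)$ have a single operation, then $\SetLin(E)$ is an execution of the sequential queue. We prove each item separately.

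\begin{enumerate}

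\item Since $E$ is sequential, the partial order $<_E$ is total, and hence no two operations are concurrent. As $\SetLin(E)$ respects $<_E$ and places concurrent operations in the same concurrency class only, every concurrency class of $\SetLin(E)$ contains a single operation. Thus $\SetLin(E)$ is an execution of the sequential queue.

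\item If there are no concurrent \Deq operations in $E$, then every concurrency class of $\SetLin(E)$ contains at most one \Deq operation. By the specification of the set-concurrent queue, every \Enq operation appears alone in its concurrency class. Hence every concurrency class of $\SetLin(E)$ contains a single operation, so $\SetLin(E)$ is an execution of the sequential queue, and $E$ is linearizable with respect to it.

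\item If any pair of \Deq operations return distinct values, then the definition of the set-concurrent queue implies that every \Deq operation appears alone in its concurrency class, since a multi-element concurrency class forces all its \Deq operations to return the same item. As in item~2, every \Enq operation also appears alone. Thus every concurrency class of $\SetLin(E)$ contains a single operation, and $E$ is linearizable with respect to the sequential queue.

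\item This is the contrapositive of the reasoning above. Suppose two \Deq operations ${\sf op}$ and ${\sf op}'$ return the same value in $E$. If they were not concurrent, then ${\sf op} <_E {\sf op}'$ (or vice versa), and by the second requirement of set-linearizability they would lie in distinct concurrency classes of $\SetLin(E)$, appearing in that order. But the transition function of the set-concurrent queue never returns the same item $x \neq \epsilon$ from two \Deq operations in distinct concurrency classes, and $\Deq() = \epsilon$ only in the single-operation transition from the empty state; this contradicts $\SetLin(E)$ being a valid set-concurrent execution. Hence ${\sf op}$ and ${\sf op}'$ are concurrent.
\end{enumerate}
\end{proof}

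The only genuine content beyond routine transcription is verifying, in item~4, that the transition function indeed forbids repeating an item across two distinct concurrency classes: the FIFO head element $x$ is consumed by the transition $\delta(x \cdot q, \{\Deq_1(), \hdots, \Deq_t()\}) = (q, \dots)$ in a single step, so once a concurrency class removes $x$ the state no longer has $x$ at its head, and no later concurrency class can return it. I expect this head-consumption argument to be the main (and only) subtle point; the rest follows the stack proof line for line, which is why the paper simply remarks that the proof is similar to that of Lemma~\ref{lemma-props-set-seq-stack}.
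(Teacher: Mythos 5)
Your proof is correct and follows essentially the same route as the paper, which itself only remarks that the argument is the transcription of the proof of Lemma~\ref{lemma-props-set-seq-stack} to the queue specification; your explicit head-consumption observation for item~4 is exactly the implicit content there. The only caveat is that, as in the paper's own statement, item~4 should be read as applying to returned \emph{items} rather than to the control value $\epsilon$ (two sequential \Deq operations on an empty queue may both return $\epsilon$ without being concurrent), so your claimed contradiction in the $\epsilon$ branch does not actually arise and that clause should simply be excluded.
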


The algorithm in Figure~\ref{figure:set-seq-queue} is a
set-linearizable \R/\W non-blocking implementation of a queue with
multiplicity, which we call \SetSeqQueue.  As for the case of the
stack before, we obtain \SetSeqQueue from \SeqQueue by: (1) replacing
the \FAI object in \SeqQueue with a \R/\W wait-free \COUNT, (2)
extending $Items$ to a matrix to handle collisions, and (3) simulating
the \SWAP operation with a \R followed by a \W.  The correctness proof
of \SetSeqQueue is similar to the correctness proof of \SetSeqStack in
Theorem~\ref{theo-set-seq-stack}.

\begin{figure}[ht]
\centering{ \fbox{
\begin{minipage}[t]{150mm}
\scriptsize
\renewcommand{\baselinestretch}{2.5} \resetline
\begin{tabbing}
aaaaa\=aaa\=aaa\=aaa\=aaa\=aaa\=aaa\=\kill 

{\bf Shared Variables:}\\

\> $Tail:$ \R/\W wait-free linearizable \COUNT base object initialized to 1\\

\> $Items[1, \hdots ][1, \hdots, n]:$
                        array of \R/\W registers initialized to $\bot$\\ \\

{\bf Operation}  $\Enq(x)$ {\bf is}\\

\line{T01} \> $tail_i \leftarrow Tail.\R()$\\

\line{T02} \> $Tail.\INC()$\\

\line{T04} \> $Items[tail_i, i].\W(x)$\\

\line{T05} \> {\bf {\sf return} } {\sf true}\\

{\bf end} \Enq\\ \\

{\bf Operation}  $\Deq()$ {\bf is} \\

\line{T06} \> $taken'_i \leftarrow 0$\\

\line{T07} \> {\bf while} {\sf true} {\bf do}\\

\line{T08} \> \> $taken_i \leftarrow 0$\\

\line{T09} \> \> $tail_i \leftarrow Tail.\R()-1$\\

\line{T10} \> \> {\bf for} $r_i \leftarrow 1$ {\bf up to} $tail_i$ {\bf do}\\

\line{T10'} \> \> \> {\bf for} $s_i \leftarrow 1$ {\bf up to} $n$ {\bf do}\\

\line{T11}  \> \> \> \> $x_i \leftarrow Items[r_i][s_i].\R()$\\

\line{T12}  \> \> \> \> {\bf if} $x_i \neq \bot$ {\bf then}\\

\line{T13}  \> \> \> \> \> $Items[r_i][s_i].\W(\top)$\\

\line{T14}  \> \> \> \> \>
       {\bf if} $x_i \neq \top$ {\bf then {\sf return}} $x_i$ {\bf end if}\\

\line{T15}  \> \> \> \> \> $taken_i \leftarrow taken_i + 1$\\

\line{T16}  \> \> \> \> {\bf end if}\\

\line{T17} \> \> \> {\bf end for}\\

\line{T17'} \> \> {\bf end for}\\

\line{T18} \> \> {\bf if} $taken_i = taken'_i$ {\bf then} {\bf {\sf return}} $\epsilon$ {\bf end if}\\\

\line{T19} \> \> $taken'_i \leftarrow taken_i$\\

\line{T20} \> {\bf end while}\\

{\bf end} \Deq

\end{tabbing}
\end{minipage}
}
  \caption{\small \R/\W non-blocking set-concurrent queue \SetSeqQueue
    with multiplicity (code for $p_i$).}
\label{figure:set-seq-queue}
}
\end{figure}

\begin{theorem}
\label{theo-set-seq-queue}
The algorithm \SetSeqQueue (Figure~{\em{\ref{figure:set-seq-queue}}})
is a \R/\W non-blocking set-linearizable
implementation of the queue with multiplicity.
\end{theorem}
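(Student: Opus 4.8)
The plan is to follow closely the proof of Theorem~\ref{theo-set-seq-stack}, replacing the reduction target \SeqStack by \SeqQueue and handling the two features that distinguish the queue: \Deq is only non-blocking (not wait-free), and it detects emptiness by comparing the number of already-taken slots across two consecutive scans (Lines~\ref{T18} and~\ref{T19}). Accordingly I would split the argument into a non-blocking part and a set-linearizability part.

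For non-blocking, I would first note that \Enq is wait-free, since Lines~\ref{T01}--\ref{T05} contain no loop and the base objects ($Tail$ and the registers) are wait-free. Consider an infinite execution and suppose, for contradiction, that only finitely many operations complete. Then after some point only \Deq operations take steps, and since there are finitely many \Enq, the value of $Tail$ and the set of entries of $Items$ holding a genuine item stabilize. A \Deq that loops forever never reads a genuine (non-$\bot$, non-$\top$) value, for otherwise it would write $\top$ at Line~\ref{T13} and return it at Line~\ref{T14}; hence no genuine item can remain in its (now fixed) scan range. But a fresh $\top$ is created only by reading a genuine value, which forces a return, so no entry ever newly becomes $\top$, the count $taken_i$ is identical in two consecutive scans, and the operation returns $\epsilon$ at Line~\ref{T18} --- contradicting that it loops forever. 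Thus infinitely many operations complete. The same solo-execution argument shows, as for the stack, that every pending operation of an arbitrary execution $E$ can be completed one at a time, so I may assume without loss of generality that all operations in $E$ are complete.

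For set-linearizability I would run the same reduction as for \SetSeqStack. Starting from $E$, for each $y \neq \epsilon$ returned by more than one \Deq I remove every such \Deq except the first one that writes $\top$ into $y$'s entry at Line~\ref{T13}; the removed operations contribute only superfluous $\top$-writes (the entry stays $\top$ after the first taker, and their no-op $\top$-writes to already-taken entries leave every surviving dequeue's taken-count unchanged), so the result $F$ is an execution with $<_F \subseteq <_E$. Because no two surviving \Deq return the same $y$, no step touches $y$'s entry between the reading \R at Line~\ref{T11} and the writing \W at Line~\ref{T13} of the \Deq that takes $y$ (another dequeue reading there would read the genuine $y$ and also return it, and no \Enq writes there after the item is placed), so I can move that \W back next to its \R to turn Lines~\ref{T11}--\ref{T13} into a single \SWAP$(\top)$, matching Line~\ref{P11} of \SeqQueue; call the result $F'$. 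I then permute, exactly as in the stack proof, the read steps at Line~\ref{T01} of the concurrent \Enq operations that obtained the same value from $Tail$ so that they appear consecutively in increasing process-index order, obtaining $G$ with $<_G=<_{F'}$. Codifying the two-dimensional $Items$ into the one-dimensional array of \SeqQueue by concatenating the touched entries of the successive rows, each row in increasing index order --- which is exactly the FIFO scan order of Lines~\ref{T10} and~\ref{T10'} --- turns $G$ into an execution $H$ of \SeqQueue with the same operations, the same responses, and $<_H=<_G$. Finally, taking any linearization $\Lin(H)$ (whose existence is the content of the appendix) I reinsert each removed \Deq into the concurrency class of $\Lin(H)$ containing the \Deq that returns the same item; since dequeues returning a common value are concurrent (Lemma~\ref{lemma-props-set-seq-queue}), the result $\SetLin(E)$ respects $<_E$ and is a set-linearization of $E$.

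The main obstacle, as in the stack, is the FIFO ordering, which I do not establish directly but reduce to the (nontrivial) linearizability of \SeqQueue. The extra difficulty specific to the queue is to check that the codification $G \to H$ preserves the emptiness mechanism: a $\top$ entry of \SetSeqQueue must correspond to a taken slot of \SeqQueue, and the bijection on touched entries must keep, scan by scan, the value of $taken_i$ counted at Line~\ref{T15} equal to the one counted at Line~\ref{P13}, so that the test at Line~\ref{T18} returns $\epsilon$ precisely when Line~\ref{P16} does. Verifying that this correspondence stays exact through all of $E \to F \to F' \to G \to H$, and in particular that removing duplicate dequeues never perturbs a surviving dequeue's taken-counts, is the delicate, bookkeeping-heavy part of the argument.
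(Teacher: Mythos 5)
Your proposal is correct and follows essentially the same route as the paper, which likewise establishes non-blocking progress from the behaviour of $taken_i$ and then reduces set-linearizability to the linearizability of \SeqQueue via the same chain $E \to F \to F' \to G \to H$ used for the stack. Your contradiction-style termination argument and your explicit bookkeeping of the $\top$-entries and taken-counts are just more detailed renderings of what the paper leaves implicit when it says the proof is ``nearly the same'' as that of Theorem~\ref{theo-set-seq-stack}.
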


\begin{proof}
First, observe that the \Enq method is wait-free. To prove that \Deq
is lock-free, it is enough to observe that the only way a \Deq
operations never terminates is because it sets larger values in
$taken_i$ at the end of each iteration of the while loop, which can
only happen if there are new \Enq operations and \Deq operations,
implying that infinitely many operations are completed.

The set-linearizability proof of the implementation is nearly the same
as in the proof of Theorem~\ref{theo-set-seq-stack}. Given any
execution $E$ without pending operations, we obtain in the same way
$F'$ and then $F'$ to obtain an ``equivalent'' execution $G$ of
\SetSeqQueue. Again, $G$ naturally corresponds to an execution of $H$
of \SeqStack, and hence we consider any linearization $\Lin(H)$ of
$H$, which is a set-linearization of $G$ in which all concurrency
classes have a single element. Finally, from $\Lin(H)$ we obtain a
set-linearization $\SetLin(E)$ of $E$ by adding the \Deq operations
that where removed from $E$.
\end{proof}

In fact, proving \SetSeqQueue implements the set-concurrent pool with multiplicity is simple,
the difficulty  comes from the FIFO order requirement of the queue, which is shown through a
simulation argument.


\section{Implications}
\label{sec-implications}

\subsection{Avoiding Costly Synchronization Operations/Patterns}

It is worth {observing} that \SetSeqStack and \SetSeqQueue allow us to
circumvent the {linearization-related} 
impossibility results in~\cite{AGHKMV11}, where it is
shown that every linearizable implementation of a queue or a stack, as
well as other {concurrent operation executions as encountered for example in}
 work-stealing, must
use either expensive {\sf Read-Modify-Write} operations (e.g. \FAI and \CAS)
or {\sf Read-After-Write patterns}~\cite{AGHKMV11} (i.e. a process writing
in a shared variable and then reading another shared variable, maybe
performing operation on other variables in between).

In the simplest \R/\W\,\COUNT implementation we are aware of, the
object is represented via a shared array $M$ with an entry per
process; process $p_i$ performs \INC by incrementing its entry,
$M[i]$, and \R by reading, one by one, the entries of $M$ and
returning the sum.  Using this simple \COUNT implementation, we obtain
from \SetSeqStack a set-concurrent stack implementation with
multiplicity, devoided of (1) {\sf Read-Modify-Write} operations, as
only \R/\W operations are used, and (2) {\sf Read-After-Write} patterns, as
in both operations, \Push and \Pop, a process first reads and then
writes.  It similarly happens with \SetSeqQueue.

\subsection{Work-stealing with multiplicity}

Our implementations also provide relaxed \emph{work-stealing}
solutions without expensive synchronization operation/patterns.
Work-stealing is a popular technique to implement load balancing in a
distributed manner, in which each process maintains its own
\emph{pool} of tasks and occasionally \emph{steals} tasks from the
pool of another process.  In more detail, a process can {\sf Put} and
{\sf Take} tasks in its own pool and {\sf Steal} tasks from another
pool.  To improve performance, \cite{MVS09}  introduced  the
notion of \emph{idempotent work-stealing} which allows a task to be
taken/stollen at least once instead of exactly once as in previous work.
Using this relaxed notion, three different solutions are presented in
that paper where the {\sf Put} and {\sf Take} operations avoid {\sf
  Read-Modify-Write} operations and {\sf Read-After-Write} patterns;
however, the {\sf Steal} operation still uses costly \CAS operations.

Our set-concurrent queue and stack implementations provide idempotent
work-stealing solutions in which no operation uses {\sf
  Read-Modify-Write} operations and {\sf Read-After-Write}
patterns. Moreover, in our solutions both {\sf Take} and {\sf Steal}
are implemented by \Pop (or \Deq), hence any process can invoke those
operations, allowing more concurrency. If we insist that {\sf Take}
and {\sf Steal} can be invoked only by the owner, $Items$ can be a
1-dimensional array.  Additionally, differently from~\cite{MVS09},
whose approach is practical, our queues and stacks with multiplicity
are formally defined, with a clear and simple semantics.

\subsection{Out-of-order queues and stacks with multiplicity}
The notion of a \emph{$k$-FIFO queue} is introduced in~\cite{KPRS12} 
(called $k$-out-of-order queue in~\cite{HKPSS13}),
in which items can be dequeued out of FIFO order up to an integer~$k
\geq 0$.  More precisely, dequeueing the oldest item may require up to
$k+1$ dequeue operations, which may return elements not younger than
the $k + 1$ oldest elements in the queue, or nothing even if the queue
is not empty.  \cite{KPRS12} also presented  a simple way to
implement a $k$-FIFO queue, through $p$ independent FIFO queue
linearizable implementations.  When a process wants to perform an
operation, it first uses a \emph{load balancer} to pick one of
the $p$ queues and then performs its operation.  The value of $k$
depends on $p$ and the load balancer.  Examples of load balancers are
round-robin load balancing, which requires the use of {\sf
  Read-Modify-Write} operations, and randomized load balancing, which
does not require coordination but can be computational locally
expensive.  As explained in~\cite{KPRS12}, the notion of a
\emph{$k$-FIFO stack} can be defined and implemented similarly.

We can relax the $k$-FIFO queues and stacks to include multiplicity,
namely, an item can be taken by several concurrent operations.  Using
$p$ instances of our set-concurrent stack or queue \R/\W
implementations, we can easily obtain set-concurrent implementations
of $k$-FIFO queues and stacks with multiplicity, where the use of
{\sf Read-Modify-Write} operations or {\sf Read-After-Write} patterns are in the load balancer.

\section{Interval-Concurrent Queues with Weak-Emptiness Check}
\label{sec-int-seq-queue}

A natural question is if in Section~\ref{sec-set-queue} we could start
with a wait-free linearizable queue implementation instead of
\SeqQueue, which is only non-blocking, and hence derive a wait-free
set-linearizable queue implementation with multiplicity.  It turns out
that it is an open question if there is a wait-free linearizable queue
implementation from objects with consensus number two. 
(Concretely, such an algorithm would show that the
  queue belongs to the {\sf Common2} family of
  operations~\cite{AWW93}.)  This question has been open for more than
two decades~\cite{AWW93} and there have been several papers proposing
wait-free implementations of restricted
queues~\cite{ACH18,E09,L01,M04,DBF05}, e.g., limiting the number of
processes that can perform a type of operations.

\begin{figure}[ht]
\centering{ \fbox{
\begin{minipage}[t]{150mm}
\scriptsize
\renewcommand{\baselinestretch}{2.5} \resetline
\begin{tabbing}
aaaaa\=aaa\=aaa\=aaa\=aaa\=aaa\=aaa\=\kill 

{\bf Shared Variables:}\\

\> $Tail:$ {\sf Fetch\&Inc} base object initialized to 1\\

\> $Items[1, \hdots ]:$ array of \SWAP base objects initialized to $\bot$\\ \\

{\bf Operation}  $\Enq(x_i)$ {\bf is} \\

\line{MM01} \> $tail_i \leftarrow Tail.\FAI()$\\

\line{MM02} \> $Items[tail_i].\W(x_i)$\\

\line{MM03} \> {\bf {\sf return} } {\sf true}\\

{\bf end} \Enq\\ \\

{\bf Operation}  $\Deq()$ {\bf is} \\

\line{MM04} \> $tail_i \leftarrow Tail.\R() - 1$\\

\line{MM05} \> {\bf for} $r_i \leftarrow 1$ {\bf up to} $tail_i$ {\bf do}\\

\line{MM06}  \> \> $x_i \leftarrow Items[r_i].\SWAP(\bot)$\\

\line{MM07}  \> \> {\bf if} $x_i \neq \bot$ {\bf then {\sf return}} $x_i$ {\bf end if}\\

\line{MM08} \> {\bf end for}\\

\line{MM09} \> {\bf {\sf return}} $\epsilon$\\

{\bf end} \Deq

\end{tabbing}
\end{minipage}
}
  \caption{\small A non-linearizable queue implementation
                               (code for process $p_i$).}
\label{figure:queue-incorrect}
}
\end{figure}

\subsection{The Tail-Chasing Problem}

One of the main difficulties to solve when trying to design such an
implementations using objects with consensus number two 
is that of reading the current position of the tail. 
This problem, which we call as \emph{tail-chasing}, can be easily
exemplified with the help of the \emph{non-linearizable} queue
implementation in Figure~\ref{figure:queue-incorrect}. The
implementation is similar to \SeqStack with the difference that \Deq
operations scan $Items$ in the opposite order, i.e. from the head to
the tail.

The problem with this implementation is that once a \Deq
has scanned unsuccessfully $Item$ (i.e., the items that were in the
queue were taken by ``faster'' operations), it returns $\epsilon$;
however, while the operation was scanning, more items could have been
enqueued, and indeed it is not safe to return $\epsilon$ as the queue
might not be empty.  Figure~\ref{fig-tail-chasing} describes an
execution of the implementation that cannot be linearized because there
is no moment in time during the execution of the \Deq operation
returning $\epsilon$ in which the queue is empty.  Certainly, this
problem can be solved as in \SeqQueue: read the tail and scan again;
thus, in order to complete, a \Deq operation is forced to \emph{chase}
the current position of the tail until it is sure there are no new items.

\begin{figure}[h]
\begin{center}
\includegraphics[width=12cm]{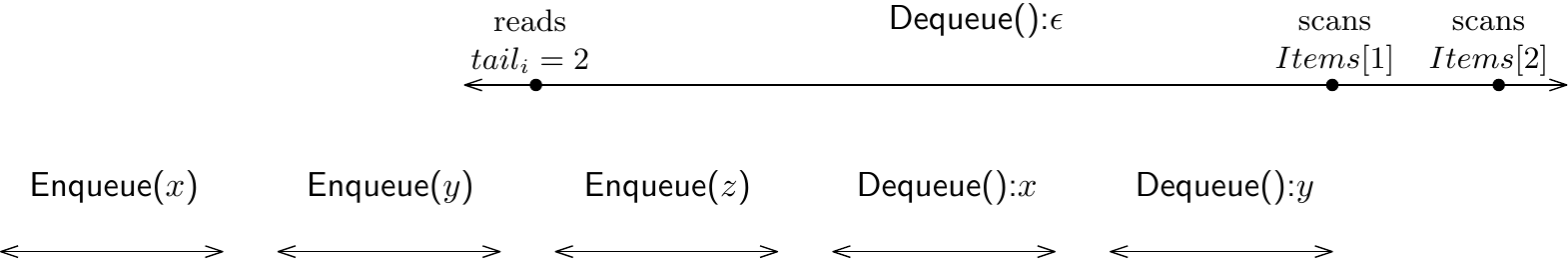}
\caption{\small An example of the tail-chasing problem.}
\label{fig-tail-chasing}
\end{center}
\end{figure}

Inspired by this problem, below we introduce a relaxed
interval-concurrent queue that allows a \Deq operation to return a
weak-empty value, with the meaning that the operation was not able
take any of the items that were in the queue when it started but it
was concurrent with all the \Deq operation that took those items,
i.e., it has a sort of \emph{certificate} that the items were taken,
and the queue might be empty. Then, we show that such a relaxed queue
can be wait-free implemented from objects with consensus number two.

\subsection{A Wait-Free Interval-Concurrent Queue with Weak-Emptiness}

Roughly speaking, in our relaxed interval-concurrent queue, the state
is a tuple $(q, P)$, where $q$ denotes the state of the queue and $P$
denotes the \emph{pending} \Deq operations that eventually return
\emph{weak-empty}, denoted $\epsilon^{\sf w}$.  More precisely, $P[i]
\neq \bot$ means that process $p_i$ has a pending \Deq operation.
$P[i]$ is a prefix of $q$ and represents the remaining items that have
to be dequeued so that the current \Deq operation of $p_i$ can return
$\epsilon^{\sf w}$.  
\Deq operations taking items from the queue, also remove the items from $P[i]$, 
and the operation of $p_i$ can return $\epsilon^{\sf w}$
only if $P[i]$ is $\epsilon$.  
Intuitively, the semantics of $\epsilon^{\sf w}$ is that the queue \emph{could} be
empty as all items that were in the queue when the operations started
have been taken. So this \Deq operation virtually occurs after all the
items have been dequeued.

\begin{figure}[ht]
\begin{center}
\includegraphics[width=10cm]{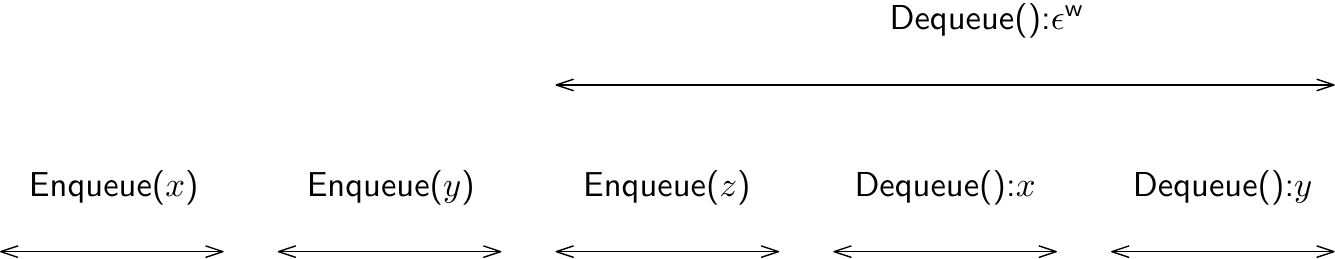}
\caption{\small An interval-concurrent execution with a \Deq
  operations returning weak-empty.}
\label{fig-exec-weak-empty}
\end{center}
\end{figure}

Figure~\ref{fig-exec-weak-empty} shows an example of an
interval-concurrent execution of our relaxed queue where the \Deq
operation returning $\epsilon^{\sf w}$ is allowed to return only when
$x$ and $y$ have been dequeued, as the queue contains those values
when the operations starts.  Observe this execution is an
interval-linearization of the execution obtained from
Figure~\ref{fig-tail-chasing} by replacing $\epsilon$ with
$\epsilon^{\sf w}$.

\begin{definition}[Interval-Concurrent Queue with Weak-Empty]
\label{def-weak-empty}
The universe of items that can be enqueued is $\mathbf{N} = \{ 1, 2,
\hdots \}$ and the set of states is $Q = \mathbf{N}^* \times
(\mathbf{N}^* \cup \{\bot\})^n$, with the initial state being
$(\epsilon, \bot, \hdots, \bot)$.  Below, a subscript denotes the ID
of the process invoking an operation.  The transitions are the
following:
\begin{enumerate}
\item
  For $(q, P) \in Q$, $0\leq t, \ell \leq n-1$,
  $\delta(q, P, \Enq(x), \Deq_{i(1)}(),
  \hdots, \Deq_{i(t)}())$ contains the transition $(q \cdot x, S,
  \langle \Enq(x): {\sf true} \rangle, \langle \Deq_{j(1)}():
  \epsilon^{\sf w} \rangle, \hdots, \langle \Deq_{j(\ell)}():
  \epsilon^{\sf w} \rangle)$, satisfying that
\begin{enumerate}

\item $i(k) \neq i(k')$, $i(k) \neq$ the id of the process invoking
$\Enq(x)$, and $j(k) \neq j(k')$,

\item for each $i(k)$, $P[i(k)] = \bot$,

\item
  for each $j(k)$, either $P[j(k)] = \epsilon$, or $P[j(k)] = \bot$
     and $q = \epsilon$ and $j(k) = i(k')$ for some~$k'$,
     
 \item for each $1 \leq s \leq n$, if there is a $k$ with $s = j(k)$, then $S[s] = \bot$;
 otherwise, if there is $k'$ with $s = i(k')$, $S[s] = q$, else $S[s] = P[s]$.

\end{enumerate}

\item For $(x \cdot q, P) \in Q$, $0 \leq t, \ell \leq n-1$, 
  $\delta(x \cdot q, P, \Deq(), \Deq_{i(1)}(), \hdots, \Deq_{i(t)}())$
  contains the transition  $(q, S, \langle \Deq(): x \rangle,
  \langle \Deq_{j(1)}(): \epsilon^{\sf w} \rangle, \hdots, \langle
  \Deq_{j(\ell)}(): \epsilon^{\sf w} \rangle)$, satisfying that
\begin{enumerate}

\item $i(k) \neq i(k')$, $i(k) \neq$ the id of the process invoking
  $\Deq()$, and $j(k) \neq j(k')$,

\item for each $i(k)$, $P[i(k)] = \bot$,

\item for each $j(k)$, either $P[j(k)] = x$, or $P[j(k)] = \bot$
  and $q = \epsilon$ and $j(k) = i(k')$ for some~$k'$,

\item for each $1 \leq s \leq n$, if there is a $k$ with $s = j(k)$, then $S[s] = \bot$;
otherwise, if there is $k'$ with $s = i(k')$, $S[s] = q$, else $S[s]$ is
the string obtained by removing the first symbol of $P[s]$ (which must be $x$).

\item if $x \cdot q = \epsilon$ and $t, \ell = 0$, then $x \in \{\epsilon, \epsilon^{\sf w}\}$.

\end{enumerate}
\end{enumerate}
\end{definition}

\begin{remark}
Every execution of the interval-concurrent queue with no dequeue
operation returning~$\epsilon^{\sf w}$ is an execution of the
sequential queue.
\end{remark}

The proof of the following lemma is
similar to the proof of Lemmas~\ref{lemma-props-set-seq-stack}
and~\ref{lemma-props-set-seq-queue}.

\begin{lemma}
\label{lemma-props-int-seq-queue}
Let $A$ be any interval-linearizable implementation of the
interval-concurrent queue with weak-empty.  Then,
\begin{enumerate}

\item All sequential executions of $A$ are executions of the sequential queue.

\item All executions in which no \Deq operation is concurrent with any
  other operation are linearizable with respect to the sequential
  queue.
\end{enumerate}
\end{lemma}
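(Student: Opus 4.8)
The plan is to mirror the proofs of Lemmas~\ref{lemma-props-set-seq-stack} and~\ref{lemma-props-set-seq-queue}, with the Remark preceding the statement playing the role of the ``all classes are singletons $\Rightarrow$ sequential'' fact used there: an interval-concurrent execution of the specification in which no \Deq returns $\epsilon^{\sf w}$ is already an execution of the sequential queue. Given an execution $E$ of $A$, interval-linearizability supplies an interval-linearization $\IntLin(E)$ agreeing with $E$ on inputs and outputs and respecting the real-time order $<_E$. The whole argument then reduces to showing that, under each hypothesis, $\IntLin(E)$ produces no $\epsilon^{\sf w}$: the Remark then turns $\IntLin(E)$ into an execution of the sequential queue, and for the second item respecting $<_E$ promotes this to a linearization of $E$.

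The core step is to read off from Definition~\ref{def-weak-empty} exactly how an $\epsilon^{\sf w}$ response can be produced by $\delta$. Inspecting the two transition schemas, a \Deq obtains $\epsilon^{\sf w}$ only as one of the secondary operations $\Deq_{j(k)}$ of a transition whose principal operation is an \Enq (first schema) or a head-returning \Deq (second schema), or else as the principal \Deq of the degenerate empty-state clause~(e). For a secondary $\Deq_{j(k)}$, clause~(c) forces either $P[j(k)]=\epsilon$—so the operation was invoked in a strictly earlier transition and its $P$-entry was emptied in between by other dequeues—or the case $P[j(k)]=\bot,\ q=\epsilon,\ j(k)=i(k')$, in which it is invoked and answered inside the very transition carrying a distinct principal operation. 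In all these cases the $\epsilon^{\sf w}$-returning \Deq overlaps another operation in $E$; this is the concurrency witness I would extract: any secondary $\Deq_{j(k)}$ is concurrent with the principal operation of its transition, and, when it reached $P[j(k)]=\epsilon$ through the $P$-mechanism, also with the intervening item-returning dequeues.

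With the witness in hand both items are immediate. For item~1, $E$ is sequential, so no two operations overlap; hence no \Deq can appear as a secondary $\Deq_{j(k)}$, which rules out every non-degenerate $\epsilon^{\sf w}$ and makes $\IntLin(E)$ a sequential-queue execution by the Remark. For item~2, the hypothesis forbids any \Deq from overlapping any operation (while \Enq's may still overlap one another); the same witness forces any secondary $\Deq_{j(k)}$ to overlap either an \Enq (first schema) or a head-returning \Deq (second schema), a contradiction. Moreover, the absence of overlapping dequeues forbids \emph{every} secondary \Deq, so each transition of $\IntLin(E)$ carries a single operation; the resulting execution is therefore sequential, respects $<_E$, and—being free of $\epsilon^{\sf w}$—is a linearization of $E$.

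The main obstacle is the principal empty-state \Deq permitted by clause~(e) of the second schema, which returns $\epsilon^{\sf w}$ with $t=\ell=0$, i.e.\ with no overlapping operation, and which the concurrency witness therefore does not eliminate. I would handle it by noting that this response is emitted exactly on the empty queue state, where clause~(e) equally sanctions $\epsilon$; hence the witnessing interval-concurrent execution may be taken with $\epsilon$ in place of $\epsilon^{\sf w}$ at every empty configuration, producing a genuine sequential-queue execution whose only divergence from $E$ is the harmless identification $\epsilon^{\sf w}\!\leftrightarrow\!\epsilon$ on the empty queue. The delicate point is to check that substituting $\epsilon$ for such an isolated $\epsilon^{\sf w}$ leaves untouched the $P$-bookkeeping of any other pending \Deq; once this is verified, the rest is a direct transcription of the set-linearizable arguments.
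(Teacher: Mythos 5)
Your proposal is correct and follows exactly the route the paper intends: the paper gives no explicit proof, saying only that it is ``similar to'' the proofs of Lemmas~\ref{lemma-props-set-seq-stack} and~\ref{lemma-props-set-seq-queue}, and your argument is the natural instantiation of that plan --- extract from Definition~\ref{def-weak-empty} that a secondary $\Deq_{j(k)}$ necessarily overlaps (in the interval-concurrent witness, hence in $E$) the distinct principal operation of its invoking and responding transitions, so the hypotheses of items~1 and~2 force every concurrency class to be a singleton and kill every non-degenerate $\epsilon^{\sf w}$. Your isolation of clause~2(e) as the one case not eliminated by concurrency is a genuine and worthwhile addition: as literally stated the lemma fails for an implementation whose solo empty-queue $\Deq$ returns $\epsilon^{\sf w}$, and your fix (reading the lemma modulo the identification $\epsilon^{\sf w}\equiv\epsilon$ at the empty state, which leaves the $P$-bookkeeping untouched since such a transition has $t=\ell=0$) is the right repair --- consistent with the paper's own later remark that clause~2(e) could be removed from Definition~\ref{def-weak-empty} at the cost of a longer proof of Theorem~\ref{theo-int-seq-queue}.
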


The algorithm in Figure~\ref{figure:int-seq-queue}, which we call
\IntSeqQueue, is an interval-linearizable wait-free implementation of
a queue with weak-emptiness, which uses base objects with consensus
number two.  \IntSeqQueue is a simple modification of \SeqQueue in which
an \Enq operation proceeds as in \SeqQueue, while a \Deq operation
scans  $Items$ at most two times to obtain an item, in both cases
recording the number of taken items. If the two numbers are the same
(cf. \emph{double clean} scan), then the operations return
$\epsilon$, otherwise
it returns $\epsilon^{\sf w}$.

\begin{figure}[ht]
\centering{ \fbox{
\begin{minipage}[t]{150mm}
\scriptsize
\renewcommand{\baselinestretch}{2.5} \resetline
\begin{tabbing}
aaaaa\=aaa\=aaa\=aaa\=aaa\=aaa\=aaa\=\kill 

{\bf Shared Variables:}\\

\> $Tail:$ {\sf Fetch\&Inc} base object initialized to 1\\

\> $Items[1, \hdots ]:$ array of \SWAP base objects initialized to $\bot$\\ \\

{\bf Operation}  $\Enq(x_i)$ {\bf is} \\

\line{PP01} \> $tail_i \leftarrow Tail.\FAI()$\\

\line{PP02} \> $Items[tail_i].\W(x_i)$\\

\line{PP03} \> {\bf {\sf return} } {\sf true}\\

{\bf end} \Enq\\ \\

{\bf Operation}  $\Deq()$ {\bf is} \\

\line{PP05} \> {\bf for} $k \leftarrow 1$ {\bf up to} $2$ {\bf do}\\

\line{PP06} \> \> $taken_i[k] \leftarrow 0$\\

\line{PP07} \> \> $tail_i \leftarrow Tail.\R()-1$\\

\line{PP08} \> \> {\bf for} $r_i \leftarrow 1$ {\bf up to} $tail_i$ {\bf do}\\

\line{PP09}  \> \> \> $x_i \leftarrow Items[r_i].\R()$\\

\line{PP10}  \> \> \> {\bf if} $x_i \neq \bot$ {\bf then}\\

\line{PP11}  \> \> \> \> $x_i \leftarrow Items[r_i].\SWAP(\top)$\\

\line{PP12}  \> \> \> \> {\bf if} $x_i \neq \top$ {\bf
     then {\sf return}} $x_i$ {\bf end if}\\

\line{PP13}  \> \> \> \> $taken_i[k] \leftarrow taken_i[k] + 1$\\

\line{PP14}  \> \> \> {\bf end if}\\

\line{PP15} \> \> {\bf end for}\\

\line{PP16} \> {\bf end for}\\

\line{PP17} \> {\bf if} $taken_i[1] = taken_i[2]$ {\bf then} {\bf {\sf return}} $\epsilon$\\

\line{PP18} \> \> {\bf else} {\bf {\sf return}} $\epsilon^{\sf w}$\\

\line{PP19} \>  {\bf end if}\\

{\bf end} \Deq

\end{tabbing}
\end{minipage}
}
\caption{\small Wait-free interval-concurrent queue from consensus
  number 2 (code for $p_i$).}
\label{figure:int-seq-queue}
}
\end{figure}

\begin{theorem}
\label{theo-int-seq-queue}
The algorithm \IntSeqQueue (Figure~{\em{\ref{figure:int-seq-queue}}})
is a wait-free interval-linearizable
implementation of the queue with weak-empty, using objects with
consensus number two.
\end{theorem}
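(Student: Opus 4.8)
The plan is to settle wait-freedom at once and then construct the interval-linearization in two layers: a linear backbone carrying the enqueues, the item-returning dequeues and the strong-empty ($\epsilon$) dequeues, obtained exactly as for \SeqQueue, and the weak-empty dequeues spliced in as genuine intervals. Wait-freedom is immediate: an \Enq performs one \FAI and one \W, while a \Deq runs the outer loop at Line~\ref{PP05} exactly twice, each inner scan bounded by the value $tail_i$ read at Line~\ref{PP07}; hence every operation returns after finitely many steps, and since only \FAI and \SWAP objects are used the implementation lives at consensus number two. For the consistency proof I fix an execution $\alpha$ and, as in the proof of Theorem~\ref{theo-set-seq-stack}, assume all operations are complete.

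First I would note that, because each $Items[r]$ is a \SWAP object into which a \Deq writes $\top$, every enqueued item is claimed by exactly one \Deq, namely the first whose \SWAP at Line~\ref{PP11} returns a non-$\top$ value. Deleting the weak-empty dequeues from $\alpha$ leaves only \Enq's, item-returning \Deq's and $\epsilon$-returning \Deq's, and on these operations the item-taking mechanism is identical to \SeqQueue; moreover an $\epsilon$-\Deq here is exactly one that witnesses a double-clean scan $taken_i[1]=taken_i[2]$, which is the clean scan that makes \SeqQueue return $\epsilon$. Hence the FIFO-ordering argument and the clean-scan emptiness argument from the linearizability proof of \SeqQueue (Appendix~\ref{app-proof-seq-queue}) carry over, producing a linearization \Lin in which the \Enq's are ordered by their \FAI positions, each item-\Deq is linearized at its successful \SWAP, and each $\epsilon$-\Deq is linearized at an empty moment between its two scans.

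The crux is the following lemma: if a \Deq $d$ of $p_i$ returns $\epsilon^{\sf w}$, then every item present in the queue at $inv(d)$ (enqueued before $inv(d)$ and not yet taken) is claimed before $res(d)$ by a \Deq that overlaps $d$. Indeed such an item occupies some position $r \le tail_i$ that $d$ scans in its first pass; when $d$ reads $Items[r]$ at Line~\ref{PP09} and swaps at Line~\ref{PP11}, the old value cannot be a real item, for otherwise $d$ would return it at Line~\ref{PP12}, contradicting $\epsilon^{\sf w}$; so the item has already been overwritten by $\top$, i.e. claimed by another \Deq, and this claim occurs after $inv(d)$ (the item is untaken at $inv(d)$) and before $res(d)$, whence the claiming \Deq overlaps $d$. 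Using the lemma I would splice $d$ into \Lin by placing its invocation at the real invocation point, setting $P[i]$ to the current queue contents (the prefix of items $d$ must see drained), and placing its $\epsilon^{\sf w}$-response just after the last of those items is claimed; by the lemma this point lies strictly inside $(inv(d),res(d))$, so the interval of $d$ is contained in its real-time interval and $<_\alpha$ is preserved.

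It then remains to verify that the resulting interval-concurrent sequence is generated by the transition function of Definition~\ref{def-weak-empty}. Each \Enq or item-\Deq step must be batched with the weak-empty invocations that become pending at that instant and the weak-empty responses it triggers, and one must check that the arrays $P$ and $S$ evolve exactly as conditions (a)--(e) demand: a pending $P[j]$ loses its front symbol precisely when that item is dequeued, a freshly pending $P[i(k)]$ is initialized to the current $q$, a completed $j(k)$ sets $S[j(k)]=\bot$, and the isolated case $x\cdot q=\epsilon$ with $t=\ell=0$ (condition 2(e)) is the only way an $\epsilon$ or $\epsilon^{\sf w}$ response can stand alone. I expect this bundling to be the main obstacle: the transition function packs a single \Enq/\Deq together with arbitrarily many simultaneous weak-empty invocations and responses, so the argument must show that the interval endpoints chosen above can always be grouped into single transitions consistent with the $P$/$S$ bookkeeping, and not merely that each operation has, in isolation, a valid linearization point.
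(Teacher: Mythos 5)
Your overall architecture is the same as the paper's: prove wait-freedom directly, delete the $\epsilon^{\sf w}$-returning dequeues to obtain an execution of \SeqQueue, linearize that, and splice the deleted operations back in as intervals. Your key lemma is correct as far as it goes and corresponds to the easy subcase of the paper's Claim~\ref{claim-case1-3}. But there are two genuine gaps in the splicing step. First, your lemma only covers items whose \Enq \emph{completed before} $inv(d)$ in real time. The invocation of $d$ cannot be placed in $\Lin$ earlier than the last operation that really precedes $d$, and the queue state $q$ at that position of $\Lin$ may contain items whose \Enq is \emph{concurrent} with $d$ (it is linearized early because it precedes some other predecessor of $d$). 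For such an item the write at Line~\ref{PP02} may occur only after both of $d$'s scans of that slot, so your ``already overwritten by $\top$'' argument does not apply, and it is not even clear that the item is ever dequeued --- yet the specification forces $P[i]$ to be all of $q$ and forbids $d$ from responding until $P[i]$ is drained. The paper closes this hole by first normalizing $\Lin$ so that enqueues are linearized as late as possible (Assumption~\ref{asmp}) and then running a separate contradiction argument for the concurrent-enqueue case; without something playing that role your interval may have no admissible right endpoint. Relatedly, ``the current queue contents at $inv(d)$'' is not well defined outside a linearization; pinning it down forces exactly the paper's shortest-prefix construction, after which you must check \emph{both} directions of real-time compatibility for every operation falling inside $d$'s interval (the paper's Claims~\ref{claim-case1-2} and~\ref{claim-case1-4}), not only for the dequeues that claim the items.

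Second, you say nothing about two weak-empty dequeues that are ordered in real time. Each of your splices is justified only in isolation; the specification additionally requires that if ${\sf DEQ_1^{\epsilon^{w}}}$ precedes ${\sf DEQ_2^{\epsilon^{w}}}$ then their intervals in the final interval-linearization do not overlap and appear in the right order. This is a substantial separate case analysis in the paper (point vs.\ interval, again leaning on Assumption~\ref{asmp}) and does not follow from the per-operation argument. By contrast, the obstacle you single out at the end --- bundling invocations and responses into the concurrency classes of Definition~\ref{def-weak-empty} --- is the mechanical part: once the interval endpoints are fixed consistently with the real-time order, the $P$/$S$ bookkeeping is read off directly. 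The real work is in justifying those endpoints.
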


\begin{proof}
It is clear that \IntSeqQueue is wait-free as all its base objects are
wait-free, thus we focus on showing it is interval-linearizable.  Let
$E$ be any execution of \IntSeqQueue. Since the algorithm is
wait-free, there is an extension of $E$ in which all its pending
operations are completed, and no new operation is started. Observe
that any interval-linearization of such an  extension is an
interval-linearization of $E$. Thus, without loss of generality, we
can assume all operations in $E$ are completed.

Consider the sequence $F$ obtained by removing from $E$ every \Deq
operation (invocation, response and steps) that returns $\epsilon^{\sf w}$.  
Observe that none of these operations changes the state of
$Tail$ and $Items$ (as they return $\epsilon^{\sf w}$), hence $F$ is
indeed an execution of \IntSeqQueue. Furthermore, $F$ is an execution
of \SeqQueue: \Enq operations behave the same in \SeqQueue and
\IntSeqQueue, and every \Deq operation scans $Items$ at most twice 
and either returns an item or finds the queue empty and hence returns
$\epsilon$. 
Since \SeqQueue is linearizable, consider a linearization $\Lin(F)$ of it.

We will obtain an interval-linearization $\IntLin(E)$ of $E$ from
$\Lin(F)$ by adding to it the \Deq operations of $E$ that return
$\epsilon^{\sf w}$.  The idea is the following. When any such
operation, say $\sf DEQ^{\epsilon^{w}}$, starts, the queue is in a
state $q$.  Since $\sf DEQ^{\epsilon^{w}}$ returns $\epsilon^{\sf w}$,
it is the case that $\sf DEQ^{\epsilon^{w}}$ runs concurrently to \Deq
operations that take the items in $q$, and other \Deq operations that
take the new items that are concurrently enqueued while $\sf
DEQ^{\epsilon^{w}}$ runs.  Roughly speaking, if $q \neq \epsilon$,
$\sf DEQ^{\epsilon^{w}}$ will be interval-linearized with all the \Deq
operations that take the items in $q$; otherwise, $\sf
DEQ^{\epsilon^{w}}$ will be interval-linearized alone, since the queue
is empty in this case.

To obtain $\IntLin(E)$, below we suppose, without loss of generality,
that $\Lin(F)$ has the following property, which, roughly speaking,
say that \Enq operations are linearized in $\Lin(F)$ ``as late as
possible''.

\begin{assumption}
\label{asmp}
If $\Lin(F)$ has two consecutive operations $\langle \Enq(x):{\sf
  true} \rangle$ and $\langle \Deq():y \rangle$ that are concurrent in
$F$ with $x \neq y$, then the order of the operations can be exchanged
to obtain another sequential execution of the queue which is a
linearization of $F$ too.  Thus, we assume that $\Lin(F)$ does not
have such a pair of operations.
\end{assumption}

Let $\sf DEQ^{\epsilon^{w}}$ be any \Deq operation of $E$ that returns
$\epsilon^{\sf w}$.  Let $\alpha$ be the shortest prefix of $\Lin(F)$
with every operation $\sf op$ of $F$ with ${\sf op} <_E {\sf
  DEQ^{\epsilon^{w}}}$; in words, $\alpha$ is the first moment in
which all operations that happen before $\sf DEQ^{\epsilon^{w}}$ in
$E$ are linearized in $\Lin(F)$.

In what follows, let $\sf OP^{last}$ denote the operation at the end
of $\alpha$ (if there is one).  First, we observe that ${\sf
  OP^{last}} <_E {\sf DEQ^{\epsilon^{w}}}$: if not, the prefix of
$\Lin(F)$ obtained by removing $\sf op$ from $\alpha$ has all
operations in $F$ that happen before ${\sf DEQ^{\epsilon^{w}}}$ in
$E$, which contradicts minimality of the length of $\alpha$. Then:

\begin{claim}
\label{claim-case1-1}
If $\alpha \neq \epsilon$, ${\sf OP^{last}} <_E {\sf DEQ^{\epsilon^{w}}}$.
\end{claim}

We also note that for every ${\sf op} \neq {\sf OP^{last}}$ in
$\alpha$ (if there is one), either ${\sf op} <_E {\sf
  DEQ^{\epsilon^{w}}}$ or ${\sf op} \, ||_E \, {\sf
  DEQ^{\epsilon^{w}}}$: if not, then ${\sf DEQ^{\epsilon^{w}}} <_E
{\sf op}$ and then ${\sf OP^{last}} <_E {\sf op}$, which leads to a
contradiction as $\sf op$ appears before $\sf OP^{last}$ in $\Lin(F)$
and thus the partial order $<_E$ is not respected in $\Lin(F)$,
implying that $\Lin(F)$ is not a linearizability of~$F$.  Thus, we
have:

\begin{claim}
\label{claim-case1-2}
If $\alpha \neq \epsilon$, for every ${\sf op} \neq {\sf OP^{last}}$
in $\alpha$, either ${\sf op} <_E {\sf DEQ^{\epsilon^{w}}}$ or ${\sf
  op} \, ||_E \, {\sf DEQ^{\epsilon^{w}}}$.
\end{claim}

Let $q$ be the state of the queue at the end of $\alpha$. 
We split the rest of proof in the following two cases:

\paragraph{Case $q = \epsilon$.} 
This case is simple. Recall that if the queue is empty, a \Deq
operation can be linearized at a single point and
non-deterministically return either $\epsilon$ or $\epsilon^{\sf w}$
(case 2.e in the definition of the interval-concurrent queue with
weak-empty).  Thus, ${\sf DEQ^{\epsilon^{w}}}$ is linearized in
$\IntLin(E)$ alone between the last operation in $\alpha$ and the next
one in $\Lin(F)$. Since $q = \epsilon$ at the end of $\alpha$, ${\sf
  DEQ^{\epsilon^{w}}}$ returning~$\epsilon^{\sf w}$ follows the
specification of the interval-concurrent queue with weak-empty.
Moreover, the linearization of ${\sf DEQ^{\epsilon^{w}}}$ in
$\IntLin(E)$ respects the partial order $<_E$, by
Claim~\ref{claim-case1-2} and since $\alpha$ contains every operations
that happens before $F$.  If there is another ${\sf
  DEQ'^{\epsilon^{w}}}$ having the same prefix $\alpha$, then ${\sf
  DEQ^{\epsilon^{w}}}$ and ${\sf DEQ'^{\epsilon^{w}}}$ are linearized
respecting $<_E$, namely, if they are concurrent, their order does not
matter, otherwise, we follows the order imposed by~$<_E$.

\paragraph{Case $q = x_1 x_2 \cdots x_t \neq \epsilon$.} 
This is the hard case.
Since $q \neq \epsilon$, $\alpha$ must contain the \Enq operations enqueuing 
the items in $q$, and thus $\alpha \neq \epsilon$.

\begin{claim}
\label{claim-case1-3}
For every item $x_i$ in $q$, there is a \Deq operation in $F$ that
dequeues $x_i$.
\end{claim}

We observe first that it is enough to prove that there is a \Deq
operation that dequeues~$x_t$.  As a corollary, we get that the same
happens for every item $x_i$ in $q$, namely, there is a \Deq operation
in $F$ that dequeues $x_i$: $\Lin(F)$ is a sequential execution of the
queue and thus $x_t$ can be dequeued only if $x_i$ (which is enqueued
before $x_t$ as it appears before $x_t$ in $q$) is dequeued first.

Consider the operation $\langle \Enq(x_t):{\sf true} \rangle$ in $\alpha$.
We identify the following subcases:
\begin{itemize}
\item ${\sf OP^{last}} = \langle \Enq(x_t):{\sf true} \rangle$ or
  $\langle \Enq(x_t):{\sf true} \rangle <_E {\sf DEQ^{\epsilon^{w}}}$.
  In any case, we have $\langle \Enq(x_t):{\sf true} \rangle <_E {\sf
    DEQ^{\epsilon^{w}}}$ and thus $x_t$ is stored in some entry in
  $Items$ before ${\sf DEQ^{\epsilon^{w}}}$ starts scanning $Items$,
  and thus there must exists a \Deq operations that takes $x_t$ before
  ${\sf DEQ^{\epsilon^{w}}}$ reads the entry where $x_t$ is written.

\item ${\sf OP^{last}} \neq \langle \Enq(x_t):{\sf true} \rangle$ and
  ${\sf DEQ^{\epsilon^{w}}} <_E \langle \Enq(x_t):{\sf true} \rangle$.
  We already saw that ${\sf OP^{last}} <_E {\sf DEQ^{\epsilon^{w}}}$,
  from which follows that ${\sf OP^{last}} <_E \langle \Enq(x_t):{\sf
    true} \rangle$. Since $\langle \Enq(x_t):{\sf true} \rangle$
  appears before ${\sf OP^{last}}$ in $\Lin(F)$, we have that the
  partial order $<_E$ is not respected in $\Lin(F)$, which is a
  contradiction as $\Lin(F)$ is a linearizability of $F$.

\item ${\sf OP^{last}} \neq \langle \Enq(x_t):{\sf true} \rangle$ and
  $\langle \Enq(x_t):{\sf true} \rangle \, ||_E \, {\sf
  DEQ^{\epsilon^{w}}}$ (see Figure~\ref{fig-ex-case-non-empty-state}). 
  Let $Items[pos]$ be the entry where $x_t$ is
  written by $\langle \Enq(x_t):{\sf true} \rangle$.  By
  contradiction, suppose that there is no \Deq operation in $F$ that
  dequeues $x_t$.  Observe that the only way this can happen is that
  $\langle \Enq(x_t):{\sf true} \rangle$ writes $x_t$
  (Line~\ref{PP02}) after ${\sf DEQ^{\epsilon^{w}}}$ has scanned two
  times that entry (i.e. it executes Line~\ref{PP09} with $Items[pos]$
  in the two iterations of the for loop);
  Figure~\ref{fig-ex-case-non-empty-state} schematizes this (and the
  rest of the argument for this case).

Observe the following: 
\begin{itemize}
\item all operations of $\alpha$ after
  $\langle \Enq(x_t):{\sf true} \rangle$ are \Deq operations 
(if not, then $x_t$ would not be the last item in $q$), and
\item each of these \Deq operations dequeues an item that is not in $q$
  (if not, that item would not appear in $q$).
\end{itemize}

\begin{figure}[ht]
\begin{center}
\includegraphics[width=12.5cm]{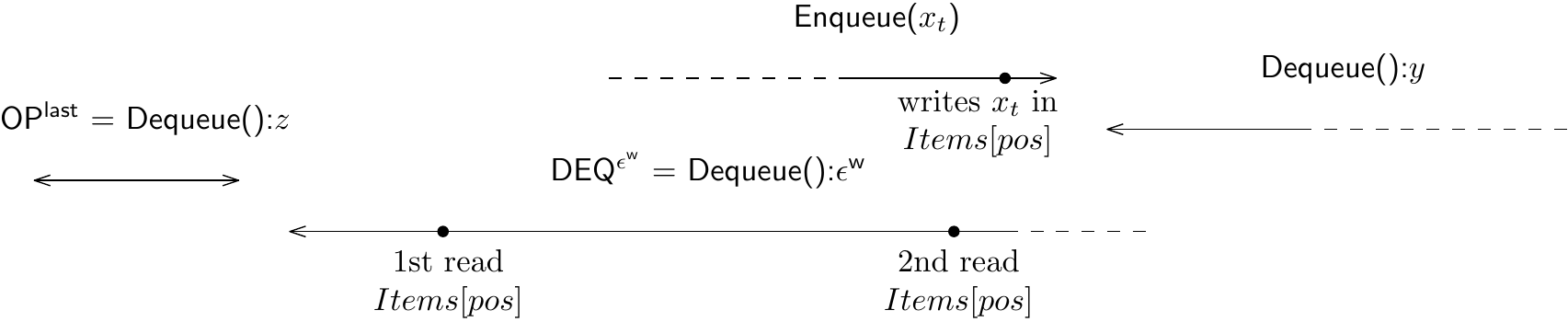}

\caption{\small Example of case
  ${\sf OP^{last}} \neq \langle \Enq(x_t):{\sf true} \rangle$ and
  $\langle \Enq(x_t):{\sf true} \, ||_E \, {\sf DEQ^{\epsilon^{w}}}$.}
\label{fig-ex-case-non-empty-state}
\end{center}
\end{figure}

Thus, we have ${\sf OP^{last}} = \langle \Deq():z \rangle$.  Let
$\langle \Deq():y \rangle$ be the operation of $\alpha$ right after
$\langle \Enq(x_t):{\sf true} \rangle$.  We argue that ${\sf
  OP^{last}} = \langle \Deq():z \rangle <_E \langle \Deq():y \rangle$,
which leads to a contradiction as $\langle \Deq():y \rangle$ appears
before $\langle \Deq():z \rangle$ in $\Lin(F)$ and thus the partial
order $<_E$ is not respected in $\Lin(F)$, implying that $\Lin(F)$ is
not a linearizability of $F$.

We already saw that $\langle \Deq():z \rangle <_E {\sf
  DEQ^{\epsilon^{w}}}$, and thus the response of $\langle \Deq():z
\rangle$ in $F$ appears before the first read of ${\sf
  DEQ^{\epsilon^{w}}}$ to entry $Items[pos]$ (where $x_t$ is written).
We also observe that $\langle \Enq(x_t):{\sf true} \rangle$ and
$\langle \Deq():y \rangle$ are not concurrent in $F$, by
Assumption~\ref{asmp}; thus it must be that $\langle \Enq(x_t):{\sf
  true} \rangle <_E \langle \Deq():y \rangle$, from which follows that
the invocation of $\langle \Deq():y \rangle$ appears in $F$ after the
second read of ${\sf DEQ^{\epsilon^{w}}}$ to entry
$Items[pos]$. Therefore, ${\sf OP^{last}} = \langle \Deq():z \rangle
<_E \langle \Deq():y \rangle$ (see
Figure~\ref{fig-ex-case-non-empty-state}), which concludes the
argument for this subcase, and completes the proof of
Claim~\ref{claim-case1-3}.
\end{itemize}

We now define the interval-linearization of $\sf DEQ^{\epsilon^{w}}$
in $\IntLin(E)$ (see Figure~\ref{fig-ex-int-lin-a}).  Let $\beta$ be
the sequence of $\Lin(F)$ containing all its operations from the
operation right after $\sf OP^{last}$ to $\langle \Deq():x_t \rangle$
(whose existence is shown above).  The interval-linearization of $\sf
DEQ^{\epsilon^{w}}$ spans the interval of $\Lin(F)$ defined by
$\beta$; formally, the invocation of $\sf DEQ^{\epsilon^{w}}$ is added
to the first concurrency class of $\beta$ and the response of it to
the last concurrency class of $\beta$.  By construction, at the
beginning of the interval-linearization of $\sf DEQ^{\epsilon^{w}}$,
the state of the queue is $q$ and at the end of it, all items in $q$
are dequeued, which follows the specification of the
interval-concurrent queue with weak-empty.

To conclude this case, we show that for every operation $\sf op$ of
$\beta$, ${\sf op} \, ||_E \, {\sf DEQ^{\epsilon^{w}}}$ in
Claim~\ref{claim-case1-4}, which together with
Claim~\ref{claim-case1-2} implies that the interval-linearization of
$\sf DEQ^{\epsilon^{w}}$ respects the partial order $<_E$.

\vspace{0.5cm}
\begin{figure}[ht]
\begin{center}
\includegraphics[width=10.5cm]{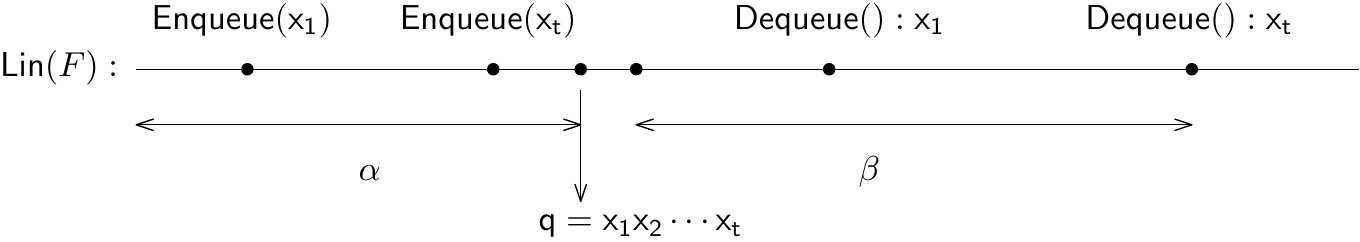}
\caption{\small Interval-linearization of
  $\sf DEQ^{\epsilon^{w}}$ when $q \neq \epsilon$.}
\label{fig-ex-int-lin-a}
\end{center}
\end{figure}

\begin{claim}
\label{claim-case1-4}
For every operation $\sf op$ of $\beta$,
${\sf op} \, ||_E \, {\sf DEQ^{\epsilon^{w}}}$.
\end{claim}

There are two subcases to be proven:
\begin{itemize}
\item It is not the case that ${\sf op} <_E {\sf
  DEQ^{\epsilon^{w}}}$. If so, then $\alpha$ is not the shortest
  prefix of $\Lin(F)$ containing all operations that happen before
  $\sf DEQ^{\epsilon^{w}}$ in $F$, which is a contradiction.

\item It is not the case that ${\sf DEQ^{\epsilon^{w}}} <_E {\sf
  op}$. For contradiction, suppose that ${\sf DEQ^{\epsilon^{w}}} <_E
  {\sf op}$.  Let $Item[pos]$ the entry where $x_t$ is written by
  $\langle \Enq(x_t):{\sf true} \rangle$, and consider the operation
  $\langle \Deq():x_t \rangle$.

\begin{itemize}

\item First, suppose that $\langle \Enq(x_t):{\sf true} \rangle$
  writes $x_t$ in $Items[pos]$ (hence, $Items[pos]$ is 
 $Items[tail_i]$ if the invoking process is $p_i$, 
  Line~\ref{PP02}) before ${\sf   DEQ^{\epsilon^{w}}}$
  performs its second read of $Items[pos]$
  (Line~\ref{PP08} corresponding to the second iteration of the for
  loop).  Note that ${\sf DEQ^{\epsilon^{w}}}$ might or might not
  obtain $x_t$ from that read operation, however, certainly it must
  happens that $\langle \Deq():x_t \rangle$ takes $x_t$ (i.e. it
  successfully obtains $x_t$ from $Items[pos]$ in Line~\ref{PP10})
  before ${\sf DEQ^{\epsilon^{w}}}$ completes. Thus, in $F$ the
  response of $\langle \Deq():x_t \rangle$ appears before the response
  of ${\sf DEQ^{\epsilon^{w}}}$. Observe that either $\langle
  \Deq():x_t \rangle <_E {\sf DEQ^{\epsilon^{w}}}$ or $\langle
  \Deq():x_t \rangle \, ||_E \, {\sf DEQ^{\epsilon^{w}}}$. Thus ${\sf
    op} \neq \langle \Deq():x_t \rangle$ as ${\sf DEQ^{\epsilon^{w}}}
  <_E {\sf op}$. Moreover, we have that $\langle \Deq():x_t \rangle
  <_E {\sf op}$.  This is a contradiction as $\sf op$ appears before
  $\langle \Deq():x_t \rangle$, which contradicts that $\Lin(F)$
  respects~$<_E$.

\item Otherwise, note that either $\langle \Enq(x_t):{\sf true}
  \rangle \, ||_E \, {\sf DEQ^{\epsilon^{w}}}$ or ${\sf
    DEQ^{\epsilon^{w}}} <_E \langle \Enq(x_t):{\sf true} \rangle$.
  Thus, $\sf OP^{last} \neq \langle \Enq(x_t):{\sf true} \rangle$
  because $\sf OP^{last} <_E {\sf DEQ^{\epsilon^{w}}}$, by
  Claim~\ref{claim-case1-1}.  We first observe that it cannot happen
  ${\sf DEQ^{\epsilon^{w}}} <_E \langle \Enq(x_t):{\sf true} \rangle$,
  because if so, $\sf OP^{last} <_E \langle \Enq(x_t):{\sf true}
  \rangle$ but in $\Lin(F)$ $\langle \Enq(x_t):{\sf true} \rangle$
  appears first and then $\sf OP^{last}$, which contradicts that
  $\Lin(F)$ is a linearization of $F$.  Thus, we have $\langle
  \Enq(x_t):{\sf true} \rangle \, ||_E \, {\sf DEQ^{\epsilon^{w}}}$.
  As observed above (third subcase in the proof of
  Claim~\ref{claim-case1-3}), all operations of $\alpha$ after
  $\langle \Enq(x_t):{\sf true} \rangle$ are \Deq operations, and each
  of these operations dequeues an item that is not in $q$.  Let $\sf
  OP^{last} = \langle \Deq():z \rangle$ and $\langle \Deq():y \rangle$
  be the operation in $\alpha$ right after $\langle \Enq(x_t):{\sf
    true} \rangle$.  By Assumption~\ref{asmp}, we must have $\langle
  \Enq(x_t):{\sf true} \rangle <_E \langle \Deq():y \rangle$.  We also
  have $\langle \Deq():z \rangle <_E {\sf DEQ^{\epsilon^{w}}}$, by
  Claim~\ref{claim-case1-1}.  Therefore, we obtain that $\langle
  \Deq():z \rangle <_E \langle \Deq():y \rangle$, which leads to a
  contradiction because $\langle \Deq():y \rangle$ appears before
  $\langle \Deq():z \rangle$ in $\Lin(F)$, implying that $\Lin(F)$ is
  not a linearization of $F$.
\end{itemize}
The second subcase follows and hence Claim~\ref{claim-case1-4} follows too.
\end{itemize}

To obtain $\IntLin(E)$, we repeat the construction above for any such
operation $\sf DEQ^{\epsilon^{w}}$ of $E$.  To conclude the proof of
the Theorem, we prove that $\IntLin(E)$ respects the partial order
$<_E$.  Since the order of operations in $\Lin(F)$ is not modified to
built $\IntLin(E)$, we only need to check that any two \Deq operations
returning $\epsilon^{\sf w}$ respect the partial order $<_E$.  Namely,
given any two ${\sf DEQ_1^{\epsilon^{w}}} <_E {\sf
  DEQ_2^{\epsilon^{w}}}$ such operations, it holds that the
interval-linearizations of the operations in $\IntLin(E)$ do not
overlap and the interval-linearization of ${\sf DEQ_1^{\epsilon^{w}}}$
appears before the interval-linearization of ${\sf
  DEQ_2^{\epsilon^{w}}}$.

Let $\alpha_i$ be the prefix of $\Lin(F)$ used to define the
interval-linearizability of ${\sf DEQ_i^{\epsilon^{w}}}$.  Recall that
$\alpha_i$ is the shortest prefix of $\Lin(F)$ containing all
operations in $F$ that happens before ${\sf DEQ_i^{\epsilon^{w}}}$ in
$E$.  First note that $\alpha_2$ cannot be a proper prefix of
$\alpha_1$ because ${\sf DEQ_1^{\epsilon^{w}}} <_E {\sf
  DEQ_2^{\epsilon^{w}}}$.

Consider first the case where the interval-linearization of
${\sf DEQ_1^{\epsilon^{w}}}$ boils down to a single linearization  point.
This can happen only if the state
of the queue at the end of $\alpha_1$ is empty. Thus, if $\alpha_1 =
\alpha_2$, the interval-linearization of ${\sf DEQ_2^{\epsilon^{w}}}$
is a point too and, by construction, ${\sf DEQ_2^{\epsilon^{w}}}$ is
linearized after ${\sf DEQ_1^{\epsilon^{w}}}$. And if $\alpha_1 \neq
\alpha_2$, then the interval-linearization of ${\sf
  DEQ_2^{\epsilon^{w}}}$, which might be a point or an interval,
necessarily appears before the linearization of ${\sf
  DEQ_1^{\epsilon^{w}}}$, by construction.

Consider now the case that the interval-linearization of
${\sf  DEQ_1^{\epsilon^{w}}}$ is an interval (a sequence of linearization points). 
 Then, the state of the queue
at the end of $\alpha_1$ is $q_1 = x_1 x_2 \cdots x_t \neq \epsilon$.
The interval-linearization of ${\sf DEQ_1^{\epsilon^{w}}}$ is the
interval $\beta_1$ of $\Lin(F)$ starting at the next operation after
$\alpha_1$ and ending at the \Deq operation that returns the item
$x_t$. Observe that there is no moment in $\beta_1$ in which the queue
is empty, and thus, if the interval-linearization of ${\sf
  DEQ_1^{\epsilon^{w}}}$ is a point, it cannot belong to $\beta_1$.

The only case that remains to be analyzed is when both
interval-linearizations of ${\sf DEQ_1^{\epsilon^{w}}}$ and ${\sf
  DEQ_2^{\epsilon^{w}}}$ are intervals. Let $\sf OP^{last}$ be the
last operation $\alpha_1$ and consider $\langle \Enq(x_t):{\sf true}
\rangle$ that appears in $\alpha_1$ too. For the sake of
contradiction, suppose that the interval-linearizations overlap.
Thus, $\langle \Deq():x_t \rangle$ belong to both
intervals. This implies that in $E$, $\langle \Deq():x_t \rangle$
takes $x_t$ (i.e. it successfully executes Lines~\ref{PP11}
and~\ref{PP12}) when ${\sf DEQ_2^{\epsilon^{w}}}$ is running.  Thus,
$\langle \Deq():x_t \rangle$ starts before of concurrently with ${\sf
  DEQ_1^{\epsilon^{w}}}$ and ends concurrently with ${\sf
  DEQ_2^{\epsilon^{w}}}$.  We analyze what happens with $\langle
\Enq(x_t):{\sf true} \rangle$:

\begin{itemize}

\item $\langle \Enq(x_t):{\sf true} \rangle = {\sf OP^{last}}$.
  This case cannot happen because if so,
  $\langle \Enq(x_t):{\sf true} \rangle <_E {\sf DEQ_1^{\epsilon^{w}}}$,
  by Claim~\ref{claim-case1-1}, and when ${\sf DEQ_1^{\epsilon^{w}}}$
  starts, $x_t$ would be already in $Items$ and ${\sf DEQ_1^{\epsilon^{w}}}$
  would take it, as $\langle \Deq():x_t \rangle$ 
takes its item only after ${\sf DEQ_2^{\epsilon^{w}}}$ starts.

\item $\langle \Enq(x_t):{\sf true} \rangle \neq {\sf OP^{last}}$. In
  this case, consider the operation of $\langle \Deq():y \rangle$
  right after $\langle \Enq(x_t):{\sf true} \rangle$ (it was already
  observed above that operation must be a \Deq with $y$ not appearing
  in $q_1$).  By Assumption~\ref{asmp}, $\langle \Enq(x_t):{\sf true}
  \rangle <_E \langle \Deq():y \rangle$. If $\langle \Deq():y \rangle
  = {\sf OP^{last}}$, then we have $\langle \Enq(x_t):{\sf true}
  \rangle <_E {\sf DEQ_1^{\epsilon^{w}}}$, because $\langle \Deq():y
  \rangle <_E {\sf DEQ_1^{\epsilon^{w}}}$, by
  Claim~\ref{claim-case1-1}. As observed in the previous case, this
  cannot happen.

We only remain to consider that $\langle \Deq():y \rangle \neq {\sf
  OP^{last}}$. Since $\langle \Enq(x_t):{\sf true} \rangle$ appears
before ${\sf OP^{last}}$ in $\Lin(F)$, we must have either $\langle
\Enq(x_t):{\sf true} \rangle <_E {\sf OP^{last}}$ or $\langle
\Enq(x_t):{\sf true} \rangle \, ||_E \, {\sf OP^{last}}$.  If $\langle
\Enq(x_t):{\sf true} \rangle <_E {\sf OP^{last}}$, then $\langle
\Enq(x_t):{\sf true} \rangle <_E {\sf DEQ_1^{\epsilon^{w}}}$, and we
already saw that leads to a contradiction; thus, that cannot happen.
Finally, if $\langle \Enq(x_t):{\sf true} \rangle \, ||_E \, {\sf
  OP^{last}}$, then the first step of $\langle \Enq(x_t):{\sf true}
\rangle$ (Line~\ref{PP02}) occurs in $E$ before ${\sf OP^{last}}$
terminates but its last step (Line~\ref{PP03}) can occur only after
${\sf DEQ_1^{\epsilon^{w}}}$ terminates because, as already explained,
$\langle \Enq(x_t):{\sf true} \rangle$ takes $x_t$ after ${\sf
  DEQ_2^{\epsilon^{w}}}$ starts.  Since $\langle \Enq(x_t):{\sf true}
\rangle <_E \langle \Deq():y \rangle$, by Assumption~\ref{asmp}, we
thus have ${\sf OP^{last}} <_E \langle \Deq():y \rangle$. We have
reached a contradiction because $\langle \Deq():y \rangle$ appears
before ${\sf OP^{last}}$, and then $\Lin(F)$ does not respect the
partial order $<_F$.
\end{itemize}

We conclude that \IntSeqQueue is interval-linearizable and wait-free.
This concludes the proof of the theorem.
\end{proof}

Observe that the definition of the interval-concurrent queue
with weak-empty allows a \Deq operation to be linearized at a single
point and non-deterministically return either $\epsilon$ or
$\epsilon^{\sf w}$, when the queue is empty (case 2.e in Definition~\ref{def-weak-empty}).  
The proof of Theorem~\ref{theo-int-seq-queue} indeed interval-linearizes
some \Deq operations returning $\epsilon^{\sf w}$ at  single points,
although the operations return that value due to concurrency.  
Certainly, we can remove case 2.e in Definition~\ref{def-weak-empty} and modify the proof of
Theorem~\ref{theo-int-seq-queue} so that every interval-linearization
of a \Deq operation returning $\epsilon^{\sf w}$ is an interval, at
the cost of making the proof more complex and longer.

\subsection{Interval-Concurrent Queue with Weak-emptiness and Multiplicity}

Using the techniques in Sections~\ref{sec-set-stack}
and~\ref{sec-set-queue}, we can obtain a \R/\W wait-free
implementation of a even more relaxed interval-concurrent queue in
which an item can be taken by several dequeue operations, i.e., with
multiplicity. In more detail, the interval-concurrent queue with
weak-emptiness is modified such that concurrent \Deq operations can
return the same item and are set-linearized in the same concurrency
class, as in Definitions~\ref{def-set-stack} and~\ref{def-set-queue}.

We obtain a \R/\W wait-free interval-concurrent implementation of the
queue with weak-emptiness and multiplicity by doing the following:
(1) replace the \FAI object in \IntSeqQueue with a \R/\W wait-free \COUNT,
(2) extend $Items$ to a matrix to handle collisions, and 
(3) simulate the \SWAP operation with a \R followed by a \W.
Thus, we have:

\begin{theorem}
There is a \R/\W wait-free interval-linearizable implementation of the
queue with weak-emptiness  and multiplicity.
\end{theorem}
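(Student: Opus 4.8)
The plan is to combine two previously established techniques. First, I would observe that \IntSeqQueue (Figure~\ref{figure:int-seq-queue}) has already been shown to be a wait-free interval-linearizable implementation of the queue with weak-empty, using objects with consensus number two (Theorem~\ref{theo-int-seq-queue}). The goal is to remove the dependence on consensus-number-two primitives (\FAI and \SWAP) and replace them with \R/\W registers, at the modest cost of introducing multiplicity. The three transformations listed just before the statement are exactly the ones used to pass from \SeqQueue to \SetSeqQueue in Section~\ref{sec-set-queue}: replace the \FAI counter \emph{Tail} with a \R/\W wait-free linearizable \COUNT accessed via a \R followed by an \INC; widen \emph{Items} from a one-dimensional array to a two-dimensional matrix $Items[1,\dots][1,\dots,n]$ so that collisions on a counter value are resolved by the per-process column; and simulate each \SWAP by a \R followed by a conditional \W. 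I would call the resulting algorithm, say, \emph{Int-Conc-Queue-Mult}, and present it in a figure obtained by applying these edits line-by-line to \IntSeqQueue.

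The correctness argument I would give is a reduction to Theorem~\ref{theo-int-seq-queue}, mirroring the structure of the proof of Theorem~\ref{theo-set-seq-queue}. Given any execution $E$ of the new algorithm, I would first extend $E$ so that all pending operations complete (wait-freedom guarantees this), so without loss of generality all operations in $E$ are complete. Then, exactly as in the proof of Theorem~\ref{theo-set-seq-stack}, for every value returned by more than one \Deq operation I would remove all but the first operation that marks the corresponding entry as taken, and collapse the \R/\W pairs back into atomic \SWAP steps and the \R/\INC pairs back into atomic \FAI steps; the same forward-moving permutation of the counter-read steps $e^b_j$ used there turns the widened matrix into the behaviour of a single \FAI counter. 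This produces an execution $H$ of \IntSeqQueue with the same surviving operations and the same partial order. By Theorem~\ref{theo-int-seq-queue}, $H$ admits an interval-linearization $\IntLin(H)$. Finally, I would reinsert the removed duplicate \Deq operations, placing each one in the same concurrency class as the (surviving) \Deq operation that returns the same item, and placing \Deq operations returning $\epsilon^{\sf w}$ where the reduction left their certificates; since any two operations returning the same item are concurrent in $E$, this respects $<_E$ and yields an interval-linearization $\IntLin(E)$ of $E$.

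The main obstacle I anticipate is that the reduction must simultaneously reconcile \emph{two} distinct relaxations that were each handled separately in earlier theorems: the multiplicity relaxation (duplicate items returned by concurrent \Deq operations, treated in Theorems~\ref{theo-set-seq-stack} and~\ref{theo-set-seq-queue}) and the weak-emptiness relaxation (the tail-chasing certificate for $\epsilon^{\sf w}$, treated in Theorem~\ref{theo-int-seq-queue}). Care is needed because the weak-emptiness analysis in Theorem~\ref{theo-int-seq-queue} relied on Assumption~\ref{asmp} and on reasoning about the \emph{exact} entry $Items[pos]$ where an item is written; once \emph{Items} becomes a matrix, ``the entry where $x_t$ is written'' must be reinterpreted as the pair $(tail_i, i)$, and the claim that a duplicate-removed \Deq still reads each touched entry in the order \IntSeqQueue does must be re-checked against the double-scan structure of the \Deq operation. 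I would argue that the reduction is applied so that, after collapsing, each surviving \Deq scans the touched entries in precisely the order \IntSeqQueue scans its one-dimensional array, so that the interval-linearization furnished by Theorem~\ref{theo-int-seq-queue} applies verbatim, and that a \Deq returning $\epsilon^{\sf w}$ in the matrix algorithm still performs a double clean scan if and only if the corresponding collapsed operation does in $H$. Establishing this equivalence between the double-scan behaviour before and after the matrix-to-array codification is the technical heart, but it follows the same pattern as the observation in the proof of Theorem~\ref{theo-set-seq-stack} that untouched entries are immaterial.
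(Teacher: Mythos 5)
Your proposal matches the paper's approach exactly: the paper proves this theorem only by stating the same three transformations (\R/\W \COUNT for \FAI, a matrix for $Items$, \R-then-\W for \SWAP) and appealing to the techniques of Sections~\ref{sec-set-stack} and~\ref{sec-set-queue}, which is precisely the reduction you describe. Your sketch of collapsing the execution back to one of \IntSeqQueue, invoking Theorem~\ref{theo-int-seq-queue}, and reinserting the duplicate \Deq operations into the concurrency classes of their survivors is in fact more detailed than what the paper provides, and your identified obstacle (reconciling the multiplicity reduction with the double-scan/weak-emptiness analysis) is the right place where care is needed.
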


\section{Final Discussion}
\label{sec-final-discussion}
Considering classical data structures initially defined for sequential
computing, this work has introduced new well-defined relaxations 
to adapt them to concurrency and  investigated 
algorithms that implement them on top of "as weak as possible" base operations.
It has first introduced the notion of set-concurrent queues and stacks with 
multiplicity, a relaxed version of queues and tasks in which an item
can be dequeued more than once by concurrent operations.  Non-blocking
and wait-free set-linearizable implementations were presented,
both based only on
the simplest \R/\W operations.  These are the first
implementations of relaxed queues and stacks using only these
operations.  The implementations imply algorithms for idempotent
work-stealing and out-of-order stacks and queues.  

The paper also
introduced a relaxed concurrent queue with weak-emptiness
check, which allows a dequeue operation to return a ``weak-empty certificate''
reporting that the queue might be empty.  A wait-free
interval-linearizable implementation using
objects with consensus number two was presented for such a relaxed queue. 
As there are only
non-blocking linearizable (not relaxed)
queue implementations using objects with consensus
number two,  it is an open question if there is such a wait-free implementation. 
The proposed queue relaxation allowed us to go from
non-blocking to wait-freedom using only objects with consensus number two.

Considering the consensus numbers 1 and 2 of the base objects on top
of  which are built the \AC{relaxations} of the stack and the queue presented in the present article,
Table~\ref{table:stack} and Table~\ref{table:queue} give a global view of
the tradeoff between
the  liveness and safety properties of the constructed stacks and  queues. 

\newcommand{\push}{\sf{push}}
\newcommand{\pop}{\sf{pop}}
\newcommand{\enqueue}{{\sf{enqueue}}}
\newcommand{\dequeue}{{\sf{dequeue}}}

\begin{table}[h!]
\begin{center}
 \begin{tabular}{|c|c|c|c|}
\hline
Base object  &  Liveness      & Safety  &  Algorithm  \\

\hline
\hline
CN = 1  &  $\push()$: wait-freedom  &  $\push()$: linearizability
& Figures \ref{figure:set-seq-stack} and \ref{figure:set-seq-stack-improve}\\
   &  $\pop()$: wait-freedom   &  $\pop()$: set-linearizability &\\
\hline
CN =  2  &  $\push()$: wait-freedom  &  $\push()$: linearizability
                                           & Figure  \ref{figure:stack}  \cite{AGM07}
\\
&     $\pop()$: wait-freedom   &  $\pop()$: linearizability &\\
\hline
\end{tabular}
\end{center}
\caption{Stack in the consensus number (CN) 1 and 2 worlds.}
\label{table:stack}
\end{table}

\begin{table}[h!]
\begin{center}
 \begin{tabular}{|c|c|c|c|}
\hline
Base object  &  Liveness      & Safety  &  Algorithm  \\ 

\hline
CN = 1  &  $\enqueue()$: wait-freedom  &  $\enqueue()$: linearizability
                                  & Figure~\ref{figure:set-seq-queue} \\
&  $\dequeue()$: non-blocking    &  $\dequeue()$:  set-linearizability  &  \\
\hline
\AC{CN = 1}  &  $\enqueue()$: wait-freedom  &  $\enqueue()$: linearizability
                                  & Derived from Figure~\ref{figure:int-seq-queue} \\
&  $\dequeue()$: wait-freedom    &  $\dequeue()$:  interval-linearizability  &  \\
\hline
CN = 2  &  $\enqueue()$: wait-freedom  &  $\enqueue()$: linearizability
                                            & Figure~\ref{figure:queue}  (modified \\

&     $\dequeue()$: non-blocking   &  $\dequeue()$: linearizability 
                                    &  version of \cite{L01}) \\
\hline
CN = 2  &  $\enqueue()$: wait-freedom  &  $\enqueue()$: linearizability
                                    & Figure~\ref{figure:int-seq-queue} \\
   &  $\dequeue()$: wait-freedom  &  $\dequeue()$: interval-linearizability &\\
\hline
\end{tabular}
\end{center}
\caption{Queue in the consensus number (CN) 1 and 2 worlds.}
\label{table:queue}
\end{table}

As far as we know, there are only non-blocking queue
implementations using objects with consensus number two, and it is an
open question if there is such a wait-free implementation. The
proposed queue relaxation allowed us to go from non-blocking to
wait-free.

This work also can be seen as prolonging the results described
in~\cite{CRR18} where the notion of interval-linearizability was
introduced and  set-linearizability~\cite{N94} was investigated.  
It has shown that linearizability, set-linearizability and
interval-linearizability constitute a hierarchy of
consistency conditions that allow us to formally express the behavior
of non-trivial (and still meaningfull)
relaxed queues and stacks on top of simple base
objects such as \R/\W registers.

An interesting extension to this work is to explore if the proposed
relaxations can lead to practical efficient implementations. Another
interesting extension is to explore if set-concurrent or
interval-concurrent relaxations of other concurrent data structures
would allow implementations to be designed without requiring the stronger
computational power
provided by  atomic {\sf Read-Modify-Write} operations.

\bibliographystyle{plainurl}

\begin{thebibliography}{99}

{\small


\bibitem{AADGMS93} 
Afek Y., Attiya H., Dolev D., Gafni E., Merritt M., and Shavit N., 
Atomic snapshots of shared memory. 
{\it Journal of the ACM}, 40(4):873-890 (1993)


\bibitem{AGM07}
Afek Y.,  Gafni E., and  and Morrison A., 
Common2 extended to stacks and unbounded concurrency. 
{\it Distributed Computing},  20(4):239-252 (2007)

  
\bibitem{AKY10}
Afek Y.,  Korland G., and Yanovsky E., 
Quasi-linearizability: relaxed consistency for improved concurrency.
{\it Proc. 14th Int'l Conference on Principles of Distributed Systems,
(OPODIS'10)}, Springer LNCS 6490, pp.~395-410 (2010)


\bibitem{AWW93}
 Afek Y., Weisberger E., and Weisman H., 
 A completeness theorem for a class of synchronization objects.
 {\it Proc. 12th ACM Symposium on Principles of Distributed Computing
 (PODC'93)}, ACM Press, pp.~159--170 (1993)

\bibitem{AKLN18}
Alistarh D., Brown T.,  Kopinsky J., Li J. and Nadiradze G.,
Distributionally linearizable data structures.
 {\it Proc. 30th on Symposium on Parallelism in Algorithms
               and Architectures (SPAA'15)}, ACM Press, pp.~133--142 (2018)

\bibitem{AKLS15}
Alistarh D.,  Kopinsky J.,  Li J., and  Shavit N., 
The SprayList: a scalable relaxed priority queue.
 {\it Proc. 20th ACM SIGPLAN Symposium on Principles and
Practice of Parallel Programming (PPOPP'15)}, ACM Press, pp.~11--20 (2015)

\bibitem{AAC12}
Aspnes J.,  Attiya H.,  and Censor{-}Hillel K., 
Polylogarithmic concurrent data structures from monotone circuits.
 {\it Journal of the ACM}, 59(1), pp.~2:1--2:24  (2012)

  
\bibitem{AACE15}
  Aspnes J.,  Attiya H.,  Censor{-}Hillel K., and  Ellen F., 
 Limited-use atomic snapshots with polylogarithmic step complexity.
 {\it Journal of the ACM}, 62(1), pp.~3:1--3:22 (2015)


\bibitem{ABDPR90}
Attiya H., Bar-Noy A., Dolev D., Peleg D., and Reischuk R., 
Renaming in an asynchronous environment.
{\it Journal of the ACM}, 37(3):524-548 (1990)


\bibitem{AGHK09}
Attiya H.,  Guerraoui  R.,  Hendler D., and Kuznetsov P.,
The complexity of obstruction-free implementations.
{\it Journal of the ACM}, 56(4), Article 24, 33 pages (2009)
 

\bibitem{AGHKMV11}
  Attiya H.,  Guerraoui  R.,  Hendler D.,  Kuznetsov P., Michael M.M.,
  and Vechev M.T., 
 Laws of order: expensive synchronization in concurrent algorithms
 cannot be eliminated.
{\it Proc. 38th ACM SIGPLAN-SIGACT Symposium on Principles
 of Programming Languages (POPL'11)}, ACM Press, pp.~487-498 (2011)

\bibitem{AHR95}
  Attiya H.,  Herlihy M., and  Rachman O., 
 Atomic snapshots using lattice agreement
{\it Distributed Computing}, 8(3):121--132 (1995)  

\bibitem{ACH18} Attiya H., Casta\~neda A., Hendler D.,
Nontrivial and universal helping for wait-free queues and stacks. 
{\it Journal of Parallel Distributed Computing,} 121:1-14 (2018)



\bibitem{CRR11}
Casta\~neda A.,  Rajsbaum S., and  Raynal M., 
The renaming problem in shared memory systems: an introduction.
{\it Elsevier Computer Science Review}, 5:229-251 (2011)

  
\bibitem{CRR18}
 Casta\~neda  A.,  Rajsbaum S., and Raynal M., 
Unifying concurrent objects and distributed tasks: interval-linearizability.
{\it Journal of the ACM}, 65(6), Article 45, 42 pages (2018)

\bibitem{E09}
Eisenstat D., 
Two-enqueuer queue in Common2.
 {\it ArXiV:0805.O444v2},  12 pages (2009)


\bibitem{EHS12}  
Ellen F., Hendler D., and  and Shavit  N., 
On the inherent sequentiality of concurrent objects,
{\it SIAM Journal on Computing}, 41(3):519-536 (2012)


\bibitem{HHHKLPSSV16} 
Haas A.,  Henzinger T.A.,  Holzer A.,   Kirsch Ch.M,  Lippautz M.,
Payer H.,  Sezgin A.,  Sokolova A., and  Veith H., 
Local linearizability for concurrent container-type data structures.
{\it Proc. 27th Int'l Conference on Concurrency Theory, (CONCUR'16)}, 
LIPIcs Vol. 59, pages  6:1--6:15 (2016)




\bibitem{HLHPSKS13}
Haas A., Lippautz M.,   Henzinger T.A., Payer H., Sokolova A., 
 Kirsch C.M., and  Sezgin A., 
 Distributed queues in shared memory: multicore performance and scalability
 through quantitative relaxation.
 {\it Computing Frontiers Conference  (CF'13)}, ACM Press, 
pp.17:1--17:9 (2013)


\bibitem{HKPSS13}
Henzinger T.A., Kirsch C.M., Payer H.,  Sezgin A., and Sokolova A., 
Quantitative relaxation of concurrent data structures. 
{\it Proc.  40th  ACM SIGPLAN-SIGACT Symposium on Principles of Programming
  Languages (POPL'13)}, ACM Pres, pp.~17:1--17:9 (2013)  


\bibitem{H91}
Herlihy M.P., 
Wait-free synchronization. 
{\it ACM Transactions on Programming Languages and Systems},
13(1):124-149 (1991)



\bibitem{HS08}
 Herlihy M.P.  and Shavit N.,
{\it The art of multiprocessor programming}. 
Morgan Kaufmann, 508 pages, ISBN 978-0-12-370591-4 (2008) 

\bibitem{HW90} 
  Herlihy  M.P.  and Wing J.M.,
  Linearizability: a correctness condition  for concurrent  objects.  
{\it  ACM Transactions  on Programming Languages and Systems}, 
12(3):463-492 (1990)


\bibitem{IR12}
Imbs D. and Raynal M., 
Help when needed, but no more: efficient read/write partial snapshot. 
{\it Journal of Parallel and Distributed Computing}, 72(1):1-13 (2012)



\bibitem{KLP13}
Kirsch C.M., Lippautz, and  Payer H., 
Fast and scalable lock-free  FIFO queues. 
{\it Proc. 12th Int. Conference on    Parallel Computing Technologies
(PaCT'13)}, Springer LNCS 7979, pp. 208--223 (2013)


\bibitem{KPRS12}
Kirsch C.M., Payer H.,  R{\"{o}}ck H., and  and Ana Sokolova A., 
Performance, scalability, and semantics of concurrent FIFO queues.
{\it Proc. 12th Int. Conference on Algorithms and Architectures for
  Parallel Processing (ICAAPP'12)}, Springer LNCS 7439, pp. 273--287 (2012)


\bibitem{L01}
Li Z., 
Non-blocking implementations of queues in asynchronous distributed
shared-memory systems.
{\it Tech report}, Department of Computer Science, University of Toronto,
  (2001)


\bibitem{M04}
Matei D., 
A single-enqueuer wait-free queue implementation.
{\it Proc. 18th Int'l Conference on Distributed Computing (DISC'04)},
Springer LNCS 3274, pp.~132--143 (2004)
 
\bibitem{DBF05}
Matei D.,  Brodsky A., and Ellen F., 
Restricted stack implementations.
{\it 19th Int'l Conference on Distributed Computing  (DISC'05)},
Springer LNCS 3724, pp.~137--151 (2005)


\bibitem{MVS09}
  Michael M.M., Vechev M.T., and  Saraswat S.A., 
 Idempotent work stealing.
{\it Proc. 14th {ACM} {SIGPLAN} Symposium on Principles and
               Practice of Parallel Programming, (PPOPP'09)},
ACM Press, pp.~45--54 (2009)


\bibitem{MA95}
Moir M. and  Anderson J.,
Wait-free algorithms for fast, long-lived renaming.  
{\it Science of Computer Programming}, 25(1):1-39 (1995)


\bibitem{MSchapter07}
Moir M. and  Shavit N., 
Concurrent data structures.
{\it  Handbook of Data Structures and Applications,
  chapter 47, Chapman and hall/CrC Press}, 33 pages (2007)
 

\bibitem{N94} 
Neiger G.,
Set linearizability. 
{\it Proc. 13th annual ACM symposium on Principles of distributed computing
(PODC'94)}, Brief announcement, ACM Press, page 396 (1994) 


\bibitem{NLP2013sosp}
Nguyen D., Lenharth A., and Pingali K.,
 A lightweight infrastructure for graph analytics.
{\it Proc. 24th ACM Symposium on Operating Systems Principles (SOSP'13)},
ACM Press, pp.~ 456--471 (2013)

\bibitem{PRKS11}
 Payer H., R{\"{o}}ck H.,  Kirsch M.M., and  Sokolova A., 
 Scalability versus semantics of concurrent FIFO queues (Brief announcement). 
 {\it Proc. 30th ACM Symposium on Principles of Distributed Computing
   (PODC'11)},    ACM Press, pp.~331--332 (2011)

\bibitem{R13}
Raynal M.,
{\it Concurrent programming: algorithms, principles and  foundations}.
Springer,  515 pages, ISBN 978-3-642-32026-2 (2013)

\bibitem{RSD15}
 Rihani H., Sanders P. and Dementiev R.,
 Brief Announcement: MultiQueues: simple relaxed concurrent priority queues. 
 {\it Proc. 327th {ACM} on Symposium on Parallelism in Algorithms and
   Architectures (SPAA'12)},    ACM Press, pp.~80--82 (2015)

\bibitem{S11}
Shavit N., 
Data structures in the multicore age.
{\it Communications of the ACM}, 54(3):76--84 (2011)


\bibitem{ST16}
Shavit N. and  Taubenfeld G., 
The computability of relaxed data structures: queues and stacks as  examples.
{\it Distributed Computing}, 29(5):395--407  (2016)

\bibitem{TW18}
Talmage E. and Welch J.L., 
Relaxed data types as consistency conditions.
{\it Algorithms}, 11(5)61,  18 pages (2018) 


\bibitem{T06}
 Taubenfeld G.,
{\it Synchronization algorithms and concurrent programming}.
Pearson Education/Prentice Hall, 423 pages,  ISBN 0-131-97259-6 (2006)



\bibitem{ZMS19}
 Zhou T.,  Michael  M.M.,  and Spear M.F., 
 A practical, scalable, relaxed priority queue.
 {\it Proc.  48th Int'l Conference on Parallel Processing (ICPP'19)},
pp.~57:1--57:10 (2019)

}
\end{thebibliography}


\appendix

\section{Correctness proof of \SeqQueue  (Fig.~\ref{figure:set-seq-stack},
  Section~\ref{LF-lin-queue-CN=2})}
\label{app-proof-seq-queue}

As already explained, \SeqQueue is a slight modification of Li's
non-blocking queue implementation~\cite{L01}.  The only difference
between the two implementations is that \SeqQueue relaxes the
condition for returning $\epsilon$: while Li's implementations
requires $tail_i = tail'_i \wedge taken_i = taken'_i$ in the condition
in Line~\ref{P16}, \SeqQueue only requires $taken_i = taken'_i$.
Roughly speaking, we show below that any execution of
\SeqQueue can be modified so that it corresponds to an execution of
Li's queue implementation, and hence any linearization of it is also a
linearization of the execution of \SeqQueue.

Let $E$ be any execution of \SeqQueue. Since \SeqQueue is non-blocking,
there is an extension of $E$ in which all operations are completed and
no new operations started.  Thus, we can assume that all operations in
$E$ are completed.  For every \Deq operation in $E$, let $tail'_i$ be
the value of $tail_i$ in the second-to-last while iteration, or zero
if there is only one iteration.  Let $k$ be the number of \Deq
operations returning $\epsilon$ with $tail'_i$ being distinct to
$tail_i$.  By induction on $k$, we show that $E$ is linearizable.

For the base case, $k=0$, for every \Deq operation returning
$\epsilon$, it holds $tail_i = tail'_i$ (additionally to $taken_i =
taken'_i$), namely, it satisfies the condition in Li's queue, and thus
$E$ is indeed an execution of that algorithm, which implies that $E$ 
is linearizable.  Assuming  the claim holds when $E$ has $k-1$ such 
operations, let us  show it holds for $k$.  We will modify $E$ such that it
has $k-1$ such operations. Figure~\ref{fig-example-3} exemplifies the
transformation.

\begin{figure}[ht]
\begin{center}
\includegraphics[scale=0.7]{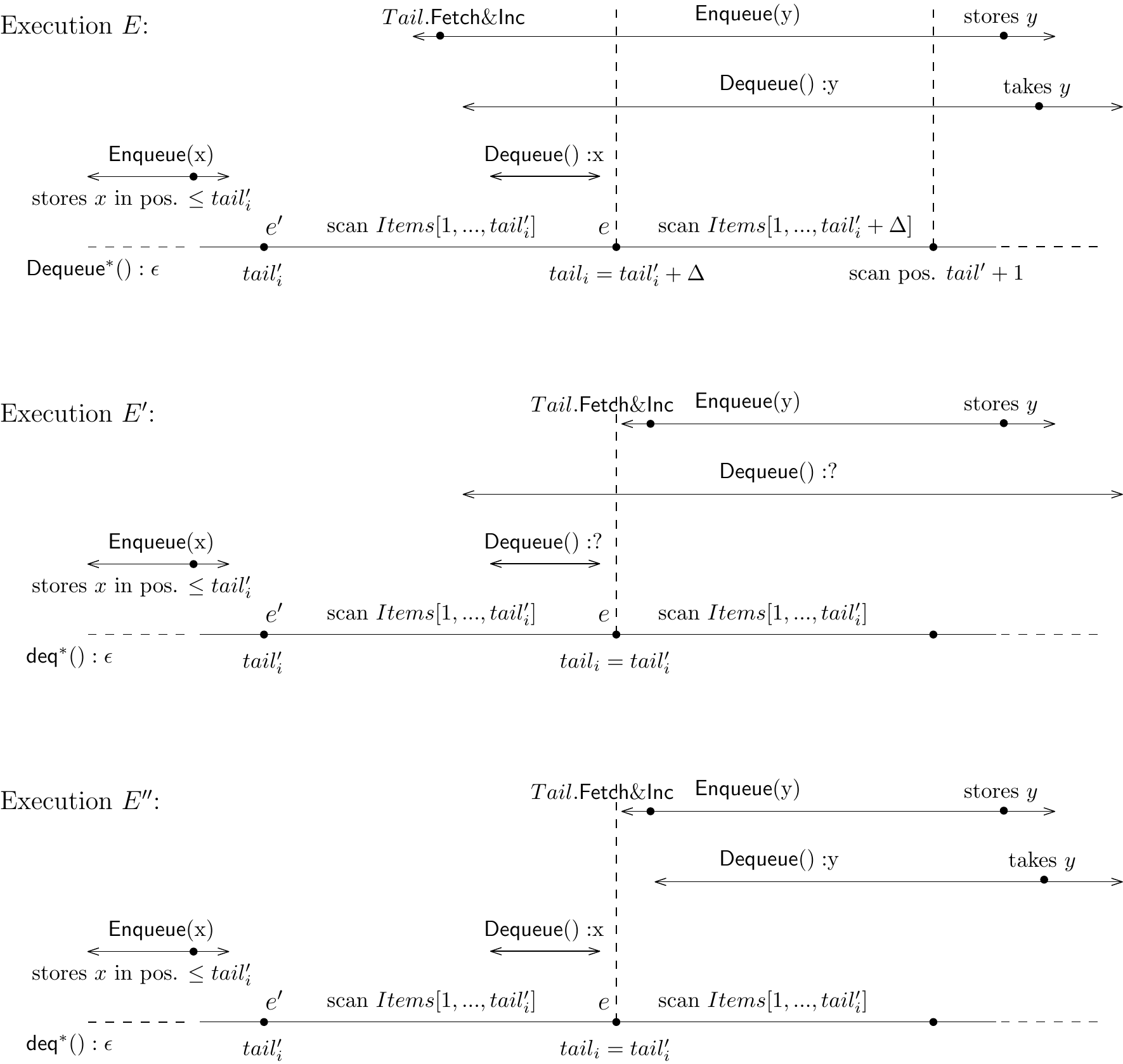}
\caption{\small An example of the transformation from \SeqQueue to
  Li's queue.}
\label{fig-example-3}
\end{center}
\end{figure}

Among those $k$ \Deq operations in $E$ (all of them returning
$\epsilon$), let $\sf deq^*$ denote the one that executes first its
last step in Line~\ref{P07}; let $e$ denote such a step and let $e'$
denote the previous step of $\sf deq^*$ corresponding to
Line~\ref{P07} (or to Line~\ref{P04} if $\sf deq^*$ executes only one
iteration of the while loop).  We have $taken_i = taken'_i$ and
$tail_i \neq tail'_i$ when $\sf deq^*$ returns $\epsilon$ in
Line~\ref{P16}.  Thus, $tail_i = tail'_i + \Delta$, for some integer
$\Delta \geq 1$, which can only happen if there is at least one \Enq
operation that executes its step in Line~\ref{P01} in the interval $I$
of $E$ from step $e'$ to step~$e$.  Let $\sf enq^*$ be any of such
\Enq operations.  Note that $\sf enq^*$ writes its item,
Line~\ref{P02}, after $\sf deq^*$ has scanned all entries in $Items[1,
  \hdots, tail'_i]$ in its last while iteration (if not, $\sf deq^*$
finds $taken_i > taken'_i$ and it does not decides $\epsilon$ in that
iteration).  For any such operation $\sf enq^*$, we move forward its
step in Line~\ref{P01}, right after step $e$ of $\sf deq^*$,
respecting the relative order among all the steps.  Thus, in the
resulting execution $E'$, the step $e$ of $\sf deq^*$ reads the same
value from $Tail$, name, it finds $tail_i = tail'_i$ (additionally to
$taken_i = taken'_i$), hence it holds Li's queue condition for $\sf
deq^*$ and returns $\epsilon$ in $E'$ too.  Therefore, $E'$ has $k-1$
\Deq operations that return $\epsilon$ with $tail'_i$ being distinct
to $tail_i$.  Observe that operations in $E$ and $E'$ have the same
real-time order.  By induction hypothesis, $E'$ is linearizable,
however, a linearization of $E'$ might not be a linearization of $E$
because it could be that, after the modifications, a \Deq operation
returns distinct values in $E$ and $E'$.  Thus, to conclude, we modify
$E'$ to be sure that all \Deq operations in $E'$ return the same
value.

Let $f$ be a step of a \Deq operation (distinct from $\sf deq^*$)
corresponding to Line~\ref{P07} and appearing in the interval $I$ in
$E$.  If $f$ reads the value $tail'_i$, then that step obtains the
same value in $E'$ and there is nothing to do; intuitively, $f$
happens before all \Enq operations that execute Line~\ref{P01} in $I$,
and hence all modifications to $E$ happen after $f$.  If $f$ reads a
value greater than $tail'_i$, then there are two sub-cases (see
Figure~\ref{fig-example-3}).
If the \Deq operation returns in the while iteration $f$ belongs to,
an item that is stored in an entry of $Items[1, \hdots, tail'_i]$,
then we do not move $f$, because in $E'$ that step reads the value
$tail'_i$, which is fine since it just needs to scan up to that entry
in $Items$ to obtain its item.  Otherwise, we move forward $f$ right
after step $e$ of $\sf deq^*$ (and all its scan steps that appear in
$I$), respecting the relative order among all the steps in $E$.  Thus,
$f$ reads the same value in $E$ and in the resulting execution. Note
that we can do this because $\sf deq^*$ finds $taken_i = taken'_i$
when it decides, and hence the state of $Items[1, \hdots, tail_i]$
does not change between $e$ and the first read step (Line~\ref{P09})
of the last scan of $\sf deq^*$.  We do the same for each such \Deq
operation. Let $E''$ be the resulting execution.  Therefore, the
operations $E$ and $E''$ have the same real-time order and return the
same values.  By induction hypothesis, $E''$ is linearizable, and any
linearization of $E''$ is a linearization of $E$ too.  The claim
follows.

\end{document}